\newtheorem{define}{Definition}
\newtheorem{problem}{Problem}
\newtheorem{lemma}{Lemma}
\newtheorem{remark}{Remark}
\newtheorem{assumption}{Assumption}
\newtheorem{theorem}{Theorem}
\newtheorem{proposition}{Proposition}
\title{\LARGE \bf  Safety Control of Positive Monotone Systems with Bounded Uncertainties}
\author{Sadra Sadraddini and Calin Belta 
\thanks{The authors are with the Department of Mechanical Engineering, Boston University, Boston, MA 02215  \{sadra,cbelta\}@bu.edu. \\
This work was partially supported by the NSF under grants CPS- 1446151 and CMMI-1400167.}
}
\begin{document}

\maketitle

\thispagestyle{empty}
\pagestyle{empty}

\begin{abstract}
Monotone systems are prevalent in models of engineering applications such as transportation and biological networks. In this paper, we investigate the problem of finding a control strategy for a discrete time positive monotone system with bounded uncertainties such that the evolution of the system is guaranteed to be confined to a safe set in the state space for all times. By exploiting monotonicity, we propose an approach to this problem which is based on constraint programming. We find  control strategies that are based on repetitions of  finite sequences of control actions. We show that, under assumptions made in the paper, safety control of cooperative systems does not require state measurement. We demonstrate the results on a signalized urban traffic network, where the safety objective is to keep the traffic flow free of congestion.
\end{abstract}

\section{Introduction}
Designing control policies subject to safety constraints is a fundamental problem in the automation of complex systems. 
%From game theoretical perspective, the safety game problem is finding a control policy that guarantees safe trajectories, regardless of the actions taken by the adversary. 
From a game theoretic perspective, the safety control problem, also known as safety game, is the problem of finding a  control policy  that guarantees that the evolution of the system is restricted to a safe region in the state space, regardless of the actions taken by the adversary. The solution to this problem involves finding a \emph{robust control invariant set} \cite{blanchini1999survey}.
Iterative computation of robust control invariant sets has been extensively studied  for linear and piecewise affine systems \cite{kerrigan2001}\cite{rakovic2004computation}, where intensive polyhedral operations are required to carry out set iterations. 

In this work, we focus on a special class of systems that are monotone, or order preserving, and provide an alternative computational approach to the safety control problem.
cooperative systems are common in models of biological, socio-economical and transportation networks. Monotonicity, in general, is a mathematical property that indicates a type of order preserving law. Monotone autonomous systems are thoroughly studied in \cite{smith2008monotone}. In \cite{angeli2003monotone}, the authors introduced cooperative control systems and provided results on steady state responses and stability.

We consider discrete time uncertain control systems that are monotone with respect to positive orthant in the state and adversarial inputs space. In contrast to \cite{angeli2003monotone}, we do not assume monotonicity with respect to controls. We do not even require the control space to be partially ordered. On the other hand, we assume a more restrictive form of the safety region in the problem formulation. 
Our consideration of such systems and specifications is motivated by the dynamics of urban traffic networks \cite{coogan2015controlling}, which are described in more detail later in the paper. The key result of this work is to show that computing robust control invariant sets maps to computing finite sequences of control actions, which we call \emph{s-sequences}. We show that repeated executions of s-sequences are safe control policies that do not require state feedback. We also show that, under some mild assumptions, the existence of s-sequences is almost necessary. To the best of our knowledge, these fundamental insights were not established before.

Safety control of monotone systems has also been considered in \cite{ghaemi2011safety} and \cite{meyer2015safety}. However, in these papers, monotonicity with respect to the controls was also assumed. Therefore, the results of this paper are more general in this respect. Set-invariance theories are also closely related to stability analysis. In \cite{smith2008monotone}, \cite{forghani2015safety}, \cite{forghani2016design}, \cite{Coogan2014}, \cite{lovisari2014stability}, the authors studied the stability of monotone and mixed monotone deterministic systems with no control inputs.  Extending these results to cooperative systems with partially ordered adversarial inputs is relatively straightforward, but it is not so obvious for systems with control inputs, specifically for discontinuous  control admissible sets.
 
This work is also related to finite state abstraction based control of (mixed) monotone systems \cite{coogan2015efficient}. This approach enables control synthesis from rich temporal logic \cite{baier2008principles} specifications, of which safety is a special yet important class. However, discretization of the state space is computationally expensive and its complexity grows exponentially with respect to the size of the system. Furthermore, with particular focus on safety specifications of the form assumed in this paper, our results are stronger in the following ways. First, if our approach does not find a solution to the safety control problem, we are almost certain that a solution by any approach does not exist. This result is rarely achieved in finite state abstraction based control, unless a bisimulation quotient is constructed (see, e.g, \cite{tabuada2009verification}). Second, we find policies that do not require feedback, hence implementing the control loop does not require sensing. Third, our method is computationally more efficient.

This paper is organized as follows. We provide the necessary notation in Sec. \ref{sec:prelim} and  formulate the problem in Sec. \ref{sec:problem}. In Sec. \ref{sec:robust}, we show how to compute  robust control invariant sets and s-sequences. In Sec. \ref{sec:cycle}, we characterize the long term response of the system to repeated s-sequences.  In Sec. \ref{sec:necessary}, we explain the underlying assumptions and formalize the notion of almost necessity for the existence of s-sequences. Finally, we provide two case studies in Sec. \ref{sec:case}.

%The behavior of cooperative systems is thoroughly studied in \cite{smith2008cooperative} for autonomous systems and in \cite{angeli2003monotone} for systems with controls.  
%
%
%
%The long term behavior of autonomous cooperative systems is thoroughly studied in \cite{smith2008cooperative}. The authors in \cite{angeli2003monotone} have extended the notion of monotononicity to  systems with control inputs where they study the steady state response and stability of the closed loop system. 
% \cite{meyer2015safety}.

\section{Preliminaries}
\label{sec:prelim}

We denote the positive orthant of an $n$-dimensional space by $\mathbb{R}_+^n:=[0,\infty)^n$. For two vectors $a,b \in \mathbb{R}^n$, we use the following notations:
\begin{equation}
\begin{array}{ccc}
a \prec b & \Leftrightarrow & a_i < b_i, \\
a \preceq b & \Leftrightarrow & a_i \le b_i, \\
\end{array}
\end{equation}
for all $i=1,\cdots,n$. We denote the $n$-dimensional vector of all ones by $1_n$.
%\begin{equation}
%a \preceq b  \Leftrightarrow  a_i \le b_i, ~i=1,\cdots,n,
%\end{equation}
\begin{define}
Given a vector $a \in \mathbb{R}_+^n$, the set $\mathcal{R}(a)$ is defined as:
\begin{equation}
\mathcal{R}(a):=\left \{ x \in \mathbb{R}_+^n \left | \right. x \preceq a \right \}.
\end{equation} 
\end{define}
\begin{define}\cite{EricS.KimMuratArcak2016}
The set $\mathcal{S} \subseteq \mathbb{R}_+^n $ is a \emph{lower-set} if $\forall x \in \mathcal{S}$ we have $\mathcal{R}(x) \subseteq \mathcal{S}$.
\end{define}

\begin{figure}[t]
\centering
\begin{tikzpicture}[xscale=0.7,yscale=0.7]
\draw[fill=green!50]  (0,0) -- (3,0) -- (4,0) -- (4,1) -- (3,1) --  (2,3) -- (1,3.6) -- (1,4.4) -- (0,4.4) -- (0,3.6) -- cycle;
\draw[dashed,color=black!50]  ( 0.6 ,0) -- ( 0.6 ,5);
\draw[dashed,color=black!50]  (0, 0.4 ) -- (5, 0.4 );
\draw[dashed,color=black!50]  ( 1.2 ,0) -- ( 1.2 ,5);
\draw[dashed,color=black!50]  (0, 0.8 ) -- (5, 0.8 );
\draw[dashed,color=black!50]  ( 1.8 ,0) -- ( 1.8 ,5);
\draw[dashed,color=black!50]  (0, 1.2 ) -- (5, 1.2 );
\draw[dashed,color=black!50]  ( 2.4 ,0) -- ( 2.4 ,5);
\draw[dashed,color=black!50]  (0, 1.6 ) -- (5, 1.6 );
\draw[dashed,color=black!50]  ( 3.0 ,0) -- ( 3.0 ,5);
\draw[dashed,color=black!50]  (0, 2.0 ) -- (5, 2.0 );
\draw[dashed,color=black!50]  ( 3.6 ,0) -- ( 3.6 ,5);
\draw[dashed,color=black!50]  (0, 2.4 ) -- (5, 2.4 );
\draw[dashed,color=black!50]  ( 4.2 ,0) -- ( 4.2 ,5);
\draw[dashed,color=black!50]  (0, 2.8 ) -- (5, 2.8 );
\draw[dashed,color=black!50]  ( 4.8 ,0) -- ( 4.8 ,5);
\draw[dashed,color=black!50]  (0, 3.2 ) -- (5, 3.2 );
\draw[dashed,color=black!50]  (0, 3.6 ) -- (5, 3.6 );
\draw[dashed,color=black!50]  (0, 4.0 ) -- (5, 4.0 );
\draw[dashed,color=black!50]  (0, 4.4 ) -- (5, 4.4 );
\draw[->] (0,0) -- (0,5);
\draw[->] (0,0) -- (5,0);
\draw (1.5,1) node[] {$\mathcal{S}$};
\end{tikzpicture}
\caption{A lower-set $\mathcal{S} \subset \mathbb{R}^2_+$.}
\label{fig:box}
\end{figure}

A graphical illustration of a lower-set is depicted in Figure \ref{fig:box}. Note that lower-sets can be non-convex. 
\begin{proposition}
\label{prop:closure}
The set of lower-sets is closed under union and intersection, i.e. if the sets $\mathcal{S}_1$ and $\mathcal{S}_2$ are lower-sets, then $\mathcal{S}_1 \cup \mathcal{S}_2$ and $\mathcal{S}_1 \cap \mathcal{S}_2$ are also lower-sets. 
\end{proposition}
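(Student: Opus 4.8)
The plan is to argue directly from the definition of a lower-set, handling the two closure claims separately but in parallel. Recall that $\mathcal{S}$ is a lower-set precisely when $x \in \mathcal{S}$ implies $\mathcal{R}(x) \subseteq \mathcal{S}$, so in each case I only need to pick an arbitrary point of the combined set and show its entire ``down-shadow'' $\mathcal{R}(\cdot)$ stays inside.

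First I would treat the union. Let $x \in \mathcal{S}_1 \cup \mathcal{S}_2$, so $x$ lies in $\mathcal{S}_1$ or in $\mathcal{S}_2$; without loss of generality say $x \in \mathcal{S}_1$. Since $\mathcal{S}_1$ is a lower-set, $\mathcal{R}(x) \subseteq \mathcal{S}_1 \subseteq \mathcal{S}_1 \cup \mathcal{S}_2$, which is exactly what is required. Next I would treat the intersection. Let $x \in \mathcal{S}_1 \cap \mathcal{S}_2$, so $x \in \mathcal{S}_1$ and $x \in \mathcal{S}_2$. Applying the lower-set property of each, $\mathcal{R}(x) \subseteq \mathcal{S}_1$ and $\mathcal{R}(x) \subseteq \mathcal{S}_2$, hence $\mathcal{R}(x) \subseteq \mathcal{S}_1 \cap \mathcal{S}_2$. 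This completes both parts.

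Honestly, there is no real obstacle here: the statement is an immediate consequence of the definition together with the elementary facts $A \subseteq A \cup B$ and $(C \subseteq A) \wedge (C \subseteq B) \Rightarrow C \subseteq A \cap B$. The only thing worth being careful about is to phrase the union case via an arbitrary membership ``$x \in \mathcal{S}_1$ or $x \in \mathcal{S}_2$'' rather than trying to reason about $\mathcal{R}(x)$ for a point that need not lie in both sets. One could also remark, as a by-product, that the argument extends verbatim to arbitrary (not just pairwise) unions and intersections of lower-sets, though that generality is not needed in what follows.
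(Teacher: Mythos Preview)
Your argument is correct and is exactly the straightforward verification from the definition of a lower-set. The paper in fact states this proposition without proof (it is treated as immediate), so there is no alternative argument to compare against; your write-up would serve perfectly well as the omitted justification.
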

%\begin{proposition}
%\label{prop:norm}
%Let $$\left \| x \right \|_p $$
%\end{proposition}

\section{Problem Statement and Approach}
\label{sec:problem}

\subsection{Motivating Application: Urban Traffic Networks}

%It is known that the dynamics of transportation networks demonstrates, to some extent \cite{coogan2015mixed}\cite{como2015throughput}, monotonic behavior. 
%Control strategies for urban traffic networks have been studied extensively in recent years.
An urban traffic network is usually modeled as a directed graph, where its edges and vertices represent traffic links and junctions, respectively. An example of an urban traffic network is shown in Figure \ref{fig:network}. We adopt the discrete time fluid-like vehicular flow model from \cite{coogan2015controlling}, which is briefly explained in Sec. \ref{sec:case2}. The control input is the set of red/green light decisions at the junctions and the adversarial inputs are the numbers of exogenous vehicles arriving in each link in one time step. An upper bound for the adversarial input of each link is assumed to be known.  From a game theoretical view, the aim of the adversary is to congest the network, while the winning condition for the player is to keep the links free of congestion. 

Monotonicity in traffic networks indicates that given a fixed sequence of control actions, an increase in the vehicular occupancy of some link leads to subsequent higher or at least equal level of occupancy in the whole network at later times. However, traffic networks are not fully cooperative. It is shown in \cite{Coogan2014} that under a \emph{first in first out} (FIFO) rule, monotonicity does not hold at diverging junctions. For instance, consider the flow in  links $2,3,10$ in Figure \ref{fig:network}. If the number of vehicles on link $3$ is near its capacity, then it limits the vehicular flow from  link $2$. On the other hand, under FIFO policy, the flow of the vehicles from  link $2$ to $10$ is also impeded. Consequently, an increase in the occupancy of  link $3$ may actually decrease the occupancy of link $10$. The authors in \cite{coogan2015mixed} studied this phenomenon and showed that traffic networks are \emph{mixed monotone}, which is a weaker property than monotonicity.  

We desire that links do not impede the vehicular flow from their upstream links, i.e. the situation described above never happens. In other words, we desire the traffic network to behave as a cooperative system. The set of states that correspond to cooperative dynamics is called \emph{cooperative region}, which is straightforward to show that is a lower-set in the state space, i.e. it always favors less amount of vehicles. Therefore, it is practically meaningful to design a control strategy that keeps the traffic dynamics cooperative, which literally means \emph{free of congestion}. From safety control perspective, the \emph{safe set} is defined as the cooperative region (or a subset of the cooperative region, as the whole cooperative region might require a large number of equations to characterize).  In addition, since the model in \cite{coogan2015controlling} is a hybrid system, restriction to this type of safe sets discards a substantial amount of modes that are capturing the non-cooperative behavior. As a result, the equations governing the evolution in the safe set (cooperative region) are much simpler than the dynamics of the system in the whole state space. This issue is discussed further in the case study at the end of the paper. 

%In other words, if someone fixes a sequence of traffic light decisions and adversarial inputs for the traffic network in Figure \ref{fig:network} and simulates the system starting from an initial condition, the number of vehicles at each link is not lower if the initial condition is    

%First, we describe the model that we adopt from \cite{coogan2015controlling}. The topology of a network is modeled by a directed graph $\mathcal{N}=(\mathcal{R},\mathcal{J})$ where $\mathcal{R}$ is the set of links (edges) and $\mathcal{J}$ is the set of junctions. The traffic The control inputs are 
%
%The number of vehicles on link $l\in \mathcal{R}$ at time $t$ is represented by $x_l[t] \in [0,x_l^{cap}]$, where $x_l^{cap}$ is the capacity of the link $l$.    

\begin{figure}[t]
\begin{center}
\begin{tikzpicture}[xscale=2,yscale=2]
\draw [blue, fill=red] ( -0.1 , 0.1 ) -- ( 0.1 , 0.1 ) -- ( 0 , 0 )-- cycle;
\draw [blue, fill=red] ( -0.1 , -0.1 ) -- ( 0.1 , -0.1 ) -- ( 0 , 0 )-- cycle;
\draw [blue, fill=green] ( -0.1 , -0.1 ) -- ( -0.1 , 0.1 ) -- ( 0 , 0 )-- cycle;
\draw [blue, fill=green] ( 0.1 , -0.1 ) -- ( 0.1 , 0.1 ) -- ( 0 , 0 )-- cycle;
\draw [] ( -0.1 , -0.1 ) rectangle ( 0.1 , 0.1 );
\draw [blue, fill=red] ( 0.9 , 0.1 ) -- ( 1.1 , 0.1 ) -- ( 1 , 0 )-- cycle;
\draw [blue, fill=red] ( 0.9 , -0.1 ) -- ( 1.1 , -0.1 ) -- ( 1 , 0 )-- cycle;
\draw [blue, fill=green] ( 0.9 , -0.1 ) -- ( 0.9 , 0.1 ) -- ( 1 , 0 )-- cycle;
\draw [blue, fill=green] ( 1.1 , -0.1 ) -- ( 1.1 , 0.1 ) -- ( 1 , 0 )-- cycle;
\draw [] ( 0.9 , -0.1 ) rectangle ( 1.1 , 0.1 );
\draw [blue, fill=red] ( 1.9 , 0.1 ) -- ( 2.1 , 0.1 ) -- ( 2 , 0 )-- cycle;
\draw [blue, fill=red] ( 1.9 , -0.1 ) -- ( 2.1 , -0.1 ) -- ( 2 , 0 )-- cycle;
\draw [blue, fill=green] ( 1.9 , -0.1 ) -- ( 1.9 , 0.1 ) -- ( 2 , 0 )-- cycle;
\draw [blue, fill=green] ( 2.1 , -0.1 ) -- ( 2.1 , 0.1 ) -- ( 2 , 0 )-- cycle;
\draw [] ( 1.9 , -0.1 ) rectangle ( 2.1 , 0.1 );
\draw [blue, fill=red] ( -0.1 , -0.9 ) -- ( 0.1 , -0.9 ) -- ( 0 , -1 )-- cycle;
\draw [blue, fill=red] ( -0.1 , -1.1 ) -- ( 0.1 , -1.1 ) -- ( 0 , -1 )-- cycle;
\draw [blue, fill=green] ( -0.1 , -1.1 ) -- ( -0.1 , -0.9 ) -- ( 0 , -1 )-- cycle;
\draw [blue, fill=green] ( 0.1 , -1.1 ) -- ( 0.1 , -0.9 ) -- ( 0 , -1 )-- cycle;
\draw [] ( -0.1 , -1.1 ) rectangle ( 0.1 , -0.9 );
\draw [blue, fill=red] ( 0.9 , -0.9 ) -- ( 1.1 , -0.9 ) -- ( 1 , -1 )-- cycle;
\draw [blue, fill=red] ( 0.9 , -1.1 ) -- ( 1.1 , -1.1 ) -- ( 1 , -1 )-- cycle;
\draw [blue, fill=green] ( 0.9 , -1.1 ) -- ( 0.9 , -0.9 ) -- ( 1 , -1 )-- cycle;
\draw [blue, fill=green] ( 1.1 , -1.1 ) -- ( 1.1 , -0.9 ) -- ( 1 , -1 )-- cycle;
\draw [] ( 0.9 , -1.1 ) rectangle ( 1.1 , -0.9 );
\draw [blue, fill=red] ( 1.9 , -0.9 ) -- ( 2.1 , -0.9 ) -- ( 2 , -1 )-- cycle;
\draw [blue, fill=red] ( 1.9 , -1.1 ) -- ( 2.1 , -1.1 ) -- ( 2 , -1 )-- cycle;
\draw [blue, fill=green] ( 1.9 , -1.1 ) -- ( 1.9 , -0.9 ) -- ( 2 , -1 )-- cycle;
\draw [blue, fill=green] ( 2.1 , -1.1 ) -- ( 2.1 , -0.9 ) -- ( 2 , -1 )-- cycle;
\draw [] ( 1.9 , -1.1 ) rectangle ( 2.1 , -0.9 );
\draw [line width=0.4mm, ->] ( -0.9 , 0 ) -- ( -0.1 , 0 );
\draw [line width=0.4mm,->] ( 0.1 , 0 ) -- ( 0.9 , 0 );
\draw [line width=0.4mm,->] ( 1.1 , 0 ) -- ( 1.9 , 0 );
\draw [line width=0.4mm,->,dashed] ( 2.1 , 0 ) -- ( 2.9 , 0 );
\draw [line width=0.4mm,<-,dashed] ( -0.9 , -1 ) -- ( -0.1 , -1 );
\draw [line width=0.4mm,<-] ( 0.1 , -1 ) -- ( 0.9 , -1 );
\draw [line width=0.4mm,<-] ( 1.1 , -1 ) -- ( 1.9 , -1 );
\draw [line width=0.4mm,<-] ( 2.1 , -1 ) -- ( 2.9 , -1 );
\draw [line width=0.4mm,<-,dashed] ( 0 , 0.7 ) -- ( 0 , 0.1 );
\draw [line width=0.4mm,<-] ( 0 , -0.1 ) -- ( 0 , -0.9 );
\draw [line width=0.4mm,<-] ( 2 , -0.1 ) -- ( 2 , -0.9 );
\draw [line width=0.4mm,<-] ( 0 , -1.1 ) -- ( 0 , -1.7 );
\draw [line width=0.4mm,<-,dashed] ( 2 , 0.7 ) -- ( 2 , 0.1 );
\draw [line width=0.4mm,<-] ( 2 , -1.1 ) -- ( 2 , -1.7 );
\draw [line width=0.4mm,->] ( 1 , -0.1 ) -- ( 1 , -0.9 );
\draw [line width=0.4mm,->] ( 1 , -0.1 ) -- ( 1 , -0.9 );
\draw [line width=0.4mm,->] ( 1 , 0.7 ) -- ( 1 , 0.1 );
\draw [line width=0.4mm,->,dashed] ( 1 , -1.1 ) -- ( 1 , -1.7 );
\draw (-0.5,0.1) node[] {$1$};
\draw (0.5,0.1) node[] {$2$};
\draw (1.5,0.1) node[] {$3$};
\draw (0.5,-1.1) node[] {$6$};
\draw (1.5,-1.1) node[] {$5$};
\draw (2.5,-1.1) node[] {$4$};
\draw (0.85,0.5) node[] {$9$};
\draw (1.85,-0.5) node[] {$12$};
\draw (-0.1,-0.5) node[] {$8$};
\draw (0.85,-0.5) node[] {$10$};
\draw (-0.1,-1.5) node[] {$7$};
\draw (1.85,-1.5) node[] {$11$};
\draw (0.15,-0.2) node[] {$a$};
\draw (1.15,-0.2) node[] {$b$};
\draw (2.15,-0.2) node[] {$c$};
\draw (0.15,-1.2) node[] {$d$};
\draw (1.15,-1.2) node[] {$e$};
\draw (2.15,-1.2) node[] {$f$};

\end{tikzpicture}

\caption{An urban traffic network. Each directed edge $1-12$ represents a one-way road. The vertices $a-f$ are junctions. The control input is a 6-dimensional tuple, where each component represents the decision for the traffic light at each junction from the set $\left \{NS,EW \right \}$, where $NS$ and $EW$ stand for the actuation of the vehicular flow in the north-south and the east-west directions, respectively.}
\label{fig:network}
\end{center}
\end{figure}

\subsection{Problem Formulation}
We consider discrete time systems in the form of
\begin{equation}
\label{eq:dynamics}
x^+=f(x,w,u),
\end{equation}
where $ x \in \mathbb{R}_+^n $ is the state, $w \in \mathcal{W}$ is the adversarial input and $u \in \mathcal{U}$ is the control input from an admissible set $\mathcal{U}$. We assume that the set $\mathcal{W} \subset \mathbb{R}_+^m$ is a rectangle in the form of:
\begin{equation}
\label{eq:adv}
\mathcal{W}=\mathcal{R}({w^*}),
\end{equation}
which is a reasonable assumption for many networked systems where the components of the adversarial inputs are stochastically independent. Note that any set $\mathcal{W}$ can be over-approximated by a $\mathcal{R}({w^*})$.
We do not make any restrictive assumptions on $\mathcal{U}$. For instance, $\mathcal{U}$ is an index set in an urban traffic network. 
\begin{define}
System \eqref{eq:dynamics} is \emph{cooperative} if for all $x_1 \preceq x_2 , w_1,\preceq w_2$:
\begin{equation}
\label{eq:mono}
f(x_1,w_1,u) \preceq f(x_2, w_2, u),~ \forall u \in \mathcal{U}.
\end{equation}
\end{define}
We assume that system \eqref{eq:dynamics} is cooperative. Apart from this property, we do not further restrict the function $f:\mathbb{R}_+^n \times \mathcal{W} \times \mathcal{U} \rightarrow \mathbb{R}_+^n$. In particular, we are interested in hybrid systems. For example, the urban traffic model in \cite{coogan2015controlling} is a piece-wise affine hybrid system. See Sec. \ref{sec:case} or \cite{coogan2015controlling} for further details.  
\begin{remark}
In this paper, monotonicity is defined with respect to the state and adversarial inputs, which is different from the definitions  in \cite{angeli2003monotone}, \cite{ghaemi2011safety} and \cite{meyer2015safety}. In the mentioned works  \footnote{In \cite{angeli2003monotone} only deterministic control systems are considered.}
, for all $x_1 \preceq x_2 , w_1 \preceq w_2, u_1 \preceq u_2$:
\begin{equation*}
f(x_1,w_1,u_1) \preceq f(x_2, w_2, u_2).
\end{equation*}
Such systems are also cooperative with respect to the control inputs. We have relaxed this condition in this paper. We do not even assume that the set $\mathcal{U}$ is partially ordered. 
\end{remark}

We wish to restrict the evolution of the state of the system to a user-defined set, which is referred to as  \emph{safe set} in the rest of the paper. We assume that safe sets are lower-sets. This is a restrictive assumption that is specifically motivated by the nature of the urban traffic networks and is also closely related to the stabilization of cooperative systems in the first orthant. 
The problems formulated in \cite{ghaemi2011safety} and \cite{meyer2015safety} consider a more general form of safe sets that are not necessarily lower-sets. 
In this paper, we consider the following problem: 

%\begin{remark}
%Our definition of monotonic control system is fundamentally different from the definition in \cite{angeli2003monotone}. In the mentioned work, the system is monotonic with respect to the partially ordered sets of state  and controls, whereas in this paper monotonicity is with respect to the state and adversarial inputs. In comparison with \cite{angeli2003monotone}, where the system is deterministic, our assumptions  are less restrictive. We even do not require the control admissible set to be partially ordered. For example, the controls in the urban traffic network example are red/green light combinations where partial order is not defined.  
%\end{remark}

%Apart from monotonicity, we don't assume any further assumption for the system dynamics $f:\mathbb{R}_+^n \times \mathcal{W} \times \mathcal{U}  \rightarrow \mathbb{R}_+^n$ and the results of this paper apply whether $f$ is linear, nonlinear or hybrid.

%\begin{assumption} 
%Origin is an uncontrolled equilibrium of the system \eqref{eq:dynamics}. 
%\begin{equation}
%f(0,0,u)=0 ~\forall u \in \mathcal{U}.
%\end{equation}
%\end{assumption}

\begin{problem}
\label{problem}
Given a cooperative system \eqref{eq:dynamics} and a lower-set safe-set $\mathcal{S} \subset \mathbb{R}_+^n$, find a  set of initial conditions  and a control strategy such that the evolution of the system, for any sequence of admissible adversarial inputs, is confined to $\mathcal{S}$ for all times.
\end{problem}

The solution to the problem above involves computation of a set $\Omega \subseteq \mathcal{S}$ and a control policy $h: \Omega \rightarrow \mathcal{U}$, such that the evolution of the system is restricted to $\Omega$. The set $\Omega$ is a \emph{robust control invariant set} (RCIS), which is formally defined in Sec. \ref{sec:robust}. We may also find the \emph{maximal robust control invariant set} (MRCIS), which corresponds to the complete solution to  Problem \ref{problem}.  However, finding MRCIS is not always computationally practical. Instead, we focus on a more tractable solution with some possible conservativeness. 
The main drawback of conservativeness is that if we can not find a RCIS, we can not claim that the MRCIS is non-existent (empty). We investigate the limitations of our approach in Sec. \ref{sec:necessary}. Informally, we show that if our approach is not able to find a RCIS (a solution to Problem \ref{problem}), it is very likely that MRCIS is empty (there does not exist a solution to Problem \ref{problem}). 
\section{Robust Controlled Invariant Set}
\label{sec:robust}
In this section, we explain how to find a RCIS inside the safe set $\mathcal{S}$. We begin with the definition of RCIS. Next, we focus on MRCIS and explain its geometrical features and computational limitations. Then the key method of this paper is presented.

\begin{define}
\label{eq:dyn}
Given system \eqref{eq:dynamics}, 
the set $\Omega \subseteq \mathbb{R}$ is RCIS if and only if:
\begin{equation*}
\forall x \in \Omega, \exists u \in \mathcal{U} ~s.t.~ f(x,w,u)\in \Omega, \forall w\in \mathcal{W}.
\end{equation*}
\end{define}
The following statements are well known results (see, e.g., \cite{kerrigan2001}) that are stated without proof. 
\begin{proposition} 
If $\Omega_i$, $i=1,\cdots,n_\Omega$ are RCISs, then $\bigcup_i \Omega_i$ is also a RCIS. 
\end{proposition}
\begin{proposition} 
If there exist a RCIS $\Omega$, then there exist a unique MRCIS $\Omega_\infty$ such that $\Omega \subseteq \Omega_\infty$.
\end{proposition}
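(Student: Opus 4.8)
The statement to prove is: if there exists an RCIS $\Omega$, then there exists a unique MRCIS $\Omega_\infty$ with $\Omega \subseteq \Omega_\infty$. The natural approach is to construct $\Omega_\infty$ as the union of all RCISs contained in $\mathcal{S}$ (equivalently, in $\mathbb{R}_+^n$, since RCIS is defined relative to the dynamics, but here we want the one inside the safe set), and then show this union is itself an RCIS. First I would define $\Omega_\infty := \bigcup \{ \Omega' \mid \Omega' \text{ is an RCIS} \}$. This collection is nonempty by hypothesis. By the previous proposition (closure of RCISs under union — stated just above for finitely many sets, but the same argument works for arbitrary unions), $\Omega_\infty$ is an RCIS: given any $x \in \Omega_\infty$, it lies in some RCIS $\Omega'$, so there is $u \in \mathcal{U}$ with $f(x,w,u) \in \Omega' \subseteq \Omega_\infty$ for all $w \in \mathcal{W}$.

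Next I would establish maximality: by construction every RCIS is a subset of $\Omega_\infty$, so in particular $\Omega \subseteq \Omega_\infty$, and $\Omega_\infty$ is the largest RCIS. Uniqueness is then immediate: if $\Omega_\infty'$ were another maximal RCIS, then $\Omega_\infty' \subseteq \Omega_\infty$ by maximality of $\Omega_\infty$, and $\Omega_\infty \subseteq \Omega_\infty'$ by maximality of $\Omega_\infty'$, hence $\Omega_\infty = \Omega_\infty'$.

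The one point that deserves care — and which I expect to be the main (mild) obstacle — is the extension of the union-closure proposition from finite unions to an arbitrary (possibly uncountable) family of RCISs. The argument I sketched above is pointwise and does not actually require finiteness: for each $x$ in the union, membership is witnessed inside a single member of the family, so no limiting or measurability issue arises. I would simply remark that the finite-union proposition's proof carries over verbatim to arbitrary unions, and invoke that. If the paper wishes to avoid relying on arbitrary unions, an alternative is to take $\Omega_\infty$ to be the fixed point of the iteration $\Omega^{(0)} = \mathcal{S}$, $\Omega^{(k+1)} = \{ x \in \Omega^{(k)} \mid \exists u \in \mathcal{U},\ f(x,w,u) \in \Omega^{(k)}\ \forall w \in \mathcal{W}\}$, show the sequence is nested and decreasing, that $\Omega_\infty = \bigcap_k \Omega^{(k)}$ contains every RCIS (by induction: any RCIS $\Omega'$ satisfies $\Omega' \subseteq \Omega^{(k)}$ for all $k$), and that it is itself an RCIS — but this requires a continuity/compactness argument to pass the invariance property through the intersection, which is heavier than needed. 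I would therefore present the direct union construction.
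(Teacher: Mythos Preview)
The paper does not actually prove this proposition; it explicitly states it (together with the preceding closure-under-union proposition) as a well-known result, citing Kerrigan's thesis, and gives no argument. Your proposal is the standard proof and is correct: defining $\Omega_\infty$ as the union of all RCISs, the pointwise argument you give shows $\Omega_\infty$ is itself an RCIS (and you are right that the finite-union proposition extends verbatim to arbitrary unions, since invariance is witnessed inside a single member of the family), maximality is by construction, and uniqueness follows by mutual containment. Your remark about the alternative fixed-point/intersection construction and its need for a compactness or continuity hypothesis is also accurate, and your choice to present the direct union argument is the cleaner route.
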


Implementing the MRCIS fixed point algorithm for a hybrid system is computationally intensive and is limited to very small systems subject to convex sets (see, e.g., \cite{kerrigan2001} for discussion) . Specifically, computing the robust predecessor involves set projection that is computationally challenging for complex systems. Moreover, finite termination is not guaranteed and early termination does not result in a RCIS (a solution to Problem \ref{problem}). Instead, we exploit monotonicity to introduce a new approach. The following lemma is the key idea of the paper. 

\begin{lemma}
\label{lemma:main}
If there exist $x_0 \in \mathcal{S}$ and a control sequence $u_0, u_1,u_2,\cdots,u_{N-1}$ such that
\begin{equation}
x_{k+1}=f(x_k,w^*,u_k), k=0,\cdots,N-1
\end{equation}
satisfies the following conditions:
\begin{enumerate}
\item $x_k \in \mathcal{S}$,
\item $\exists k^* s.t. ~x_{N} \in \mathcal{R}({x_{k^*}})$,
\end{enumerate}
then the set 
\begin{equation}
\Omega=\bigcup_{k=0}^{N-1} \mathcal{R}({x_k})
\end{equation}
is a RCIS inside $\mathcal{S}$. 
\end{lemma}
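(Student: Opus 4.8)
The plan is to verify the two defining properties of ``RCIS inside $\mathcal{S}$'' directly from Definition~\ref{eq:dyn}, using cooperativity to collapse the adversary's worst case onto the single extremal disturbance $w^*$. First I would establish containment $\Omega \subseteq \mathcal{S}$: each $x_k \in \mathcal{S}$ for $k=0,\dots,N-1$ by hypothesis~(1), and $\mathcal{S}$ is a lower-set, so $\mathcal{R}(x_k) \subseteq \mathcal{S}$ for every such $k$; Proposition~\ref{prop:closure} (closure of lower-sets under finite union) then gives $\Omega = \bigcup_{k=0}^{N-1}\mathcal{R}(x_k) \subseteq \mathcal{S}$.

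Next, the invariance requirement. Take an arbitrary $x \in \Omega$; by definition of $\Omega$ there is some $k \in \{0,\dots,N-1\}$ with $x \preceq x_k$. I claim the control $u = u_k$ certifies invariance, i.e.\ $f(x,w,u_k) \in \Omega$ for all $w \in \mathcal{W}$. Fix $w \in \mathcal{W} = \mathcal{R}(w^*)$, so $w \preceq w^*$; cooperativity applied to $x \preceq x_k$ and $w \preceq w^*$ yields $f(x,w,u_k) \preceq f(x_k,w^*,u_k) = x_{k+1}$. Now split on $k$: if $k \le N-2$, then $x_{k+1}$ is one of $x_1,\dots,x_{N-1}$, so $f(x,w,u_k) \in \mathcal{R}(x_{k+1}) \subseteq \Omega$; if $k = N-1$, then $x_{k+1}=x_N$, and hypothesis~(2) supplies $k^* \in \{0,\dots,N-1\}$ with $x_N \preceq x_{k^*}$, so by transitivity of $\preceq$ we get $f(x,w,u_{N-1}) \preceq x_{k^*}$ and hence $f(x,w,u_{N-1}) \in \mathcal{R}(x_{k^*}) \subseteq \Omega$. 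I would also remark that $f$ takes values in $\mathbb{R}_+^n$, so the successor vectors genuinely lie in the sets $\mathcal{R}(\cdot)$. Since $x \in \Omega$ and $w \in \mathcal{W}$ were arbitrary, $\Omega$ satisfies Definition~\ref{eq:dyn}, and together with the containment it is a RCIS inside $\mathcal{S}$.

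The only nontrivial idea — and the step I expect to be the crux — is the reduction of the universally quantified adversary to the single extremal input $w^*$: cooperativity is precisely what makes $x_{k+1}=f(x_k,w^*,u_k)$ a common upper bound for every successor $f(x,w,u_k)$ with $x \preceq x_k$ and $w \in \mathcal{W}$, after which the down-set structure of $\Omega$ absorbs the bound automatically. The one bookkeeping point to be careful about is the range of $k^*$ in condition~(2): it must index one of the pieces actually appearing in $\Omega$ (the case $k^* = N$ being vacuous since $x_N \preceq x_N$ trivially), so that $\mathcal{R}(x_{k^*}) \subseteq \Omega$ in the final step $k = N-1$.
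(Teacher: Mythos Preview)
Your proof is correct and follows essentially the same route as the paper's: pick an index $k$ with $x\preceq x_k$, apply $u_k$, use cooperativity to bound the successor by $x_{k+1}$, and close the loop via condition~(2) when $k=N-1$. If anything, your write-up is more complete, since you explicitly verify $\Omega\subseteq\mathcal{S}$ from the lower-set hypothesis and flag the required range $k^*\in\{0,\dots,N-1\}$, both of which the paper's proof leaves implicit.
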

\begin{proof}
We show that for any point in $\Omega$, there exist a control such that for all adversarial inputs, the successor is in $\Omega$. For all $x^\prime \in \Omega, \exists p^\prime \le N-1 ~s.t.~ x^\prime \in \mathcal{R}(x_p)$. Now we apply $u_p$. Monotonicity implies $f(x^\prime,w,u_p) \preceq f(x_p,w^*,u_p)=x_{p+1}$. Therefore, $f(x^\prime,w,u_p) \in \mathcal{R}(x_{p+1})$. But we know that $\mathcal{R}(x_{p+1}) \subset \Omega$ for all $p=0,\cdots,N-1$, where $\mathcal{R}(x_N) \subseteq \mathcal{R}(x_k^*) \subset \Omega$ follows from condition (2). Therefore, $f(x^\prime,w,u_p) \in \Omega$. \end{proof}

\begin{figure}[t!]
\centering
\begin{tabular}{l r}
\begin{tikzpicture}[xscale=0.7,yscale=0.7]

\draw[->] (0,0) -- (0,5);
\draw[->] (0,0) -- (5,0);
\draw[dashed,line width = 0.3mm]  (4,2) -- (3,3) -- (2,3.4) -- (3.5,1.5);

\draw(4,2)node[] {\textbullet};
\draw (3,3) node[] {\textbullet};
\draw (2,3.4) node[] {\textbullet};
\draw  (3.5,1.5) node[] {\textbullet};

\draw (4.2,2.2) node[] {$x_0$};
\draw (3.3,3.3) node[] {$x_1$};
\draw (1.6,3.6) node[] {$x_2$};
\draw (3.2,1.2) node[] {$x_3$};

\draw[dashed,color=black!30]  ( 0.6 ,0) -- ( 0.6 ,5);
\draw[dashed,color=black!30]  (0, 0.4 ) -- (5, 0.4 );
\draw[dashed,color=black!30]  ( 1.2 ,0) -- ( 1.2 ,5);
\draw[dashed,color=black!30]  (0, 0.8 ) -- (5, 0.8 );
\draw[dashed,color=black!30]  ( 1.8 ,0) -- ( 1.8 ,5);
\draw[dashed,color=black!30]  (0, 1.2 ) -- (5, 1.2 );
\draw[dashed,color=black!30]  ( 2.4 ,0) -- ( 2.4 ,5);
\draw[dashed,color=black!30]  (0, 1.6 ) -- (5, 1.6 );
\draw[dashed,color=black!30]  ( 3.0 ,0) -- ( 3.0 ,5);
\draw[dashed,color=black!30]  (0, 2.0 ) -- (5, 2.0 );
\draw[dashed,color=black!30]  ( 3.6 ,0) -- ( 3.6 ,5);
\draw[dashed,color=black!30]  (0, 2.4 ) -- (5, 2.4 );
\draw[dashed,color=black!30]  ( 4.2 ,0) -- ( 4.2 ,5);
\draw[dashed,color=black!30]  (0, 2.8 ) -- (5, 2.8 );
\draw[dashed,color=black!30]  ( 4.8 ,0) -- ( 4.8 ,5);
\draw[dashed,color=black!30]  (0, 3.2 ) -- (5, 3.2 );
\draw[dashed,color=black!30]  (0, 3.6 ) -- (5, 3.6 );
\draw[dashed,color=black!30]  (0, 4.0 ) -- (5, 4.0 );
\draw[dashed,color=black!30]  (0, 4.4 ) -- (5, 4.4 );

\end{tikzpicture}
& 
\begin{tikzpicture}[xscale=0.7,yscale=0.7]
\draw[fill=cyan!60] (0,0) -- (4,0) -- (4,2) -- (3,2) -- (3,3) -- (2,3) -- (2,3.4) -- (0,3.4) -- cycle;

\draw(4,2)node[] {\textbullet};
\draw (3,3) node[] {\textbullet};
\draw (2,3.4) node[] {$\bullet$};
\draw  (3.5,1.5) node[] {$\circ$};

\draw[dashed,color=black!30]  ( 0.6 ,0) -- ( 0.6 ,5);
\draw[dashed,color=black!30]  (0, 0.4 ) -- (5, 0.4 );
\draw[dashed,color=black!30]  ( 1.2 ,0) -- ( 1.2 ,5);
\draw[dashed,color=black!30]  (0, 0.8 ) -- (5, 0.8 );
\draw[dashed,color=black!30]  ( 1.8 ,0) -- ( 1.8 ,5);
\draw[dashed,color=black!30]  (0, 1.2 ) -- (5, 1.2 );
\draw[dashed,color=black!30]  ( 2.4 ,0) -- ( 2.4 ,5);
\draw[dashed,color=black!30]  (0, 1.6 ) -- (5, 1.6 );
\draw[dashed,color=black!30]  ( 3.0 ,0) -- ( 3.0 ,5);
\draw[dashed,color=black!30]  (0, 2.0 ) -- (5, 2.0 );
\draw[dashed,color=black!30]  ( 3.6 ,0) -- ( 3.6 ,5);
\draw[dashed,color=black!30]  (0, 2.4 ) -- (5, 2.4 );
\draw[dashed,color=black!30]  ( 4.2 ,0) -- ( 4.2 ,5);
\draw[dashed,color=black!30]  (0, 2.8 ) -- (5, 2.8 );
\draw[dashed,color=black!30]  ( 4.8 ,0) -- ( 4.8 ,5);
\draw[dashed,color=black!30]  (0, 3.2 ) -- (5, 3.2 );
\draw[dashed,color=black!30]  (0, 3.6 ) -- (5, 3.6 );
\draw[dashed,color=black!30]  (0, 4.0 ) -- (5, 4.0 );
\draw[dashed,color=black!30]  (0, 4.4 ) -- (5, 4.4 );

\draw[->] (0,0) -- (0,5);
\draw[->] (0,0) -- (5,0);
\draw (2,1.5) node[] {$\bigcup \limits_{i=0}^2 \mathcal{R}({x_i})$};
\end{tikzpicture}
\end{tabular}
\caption{(Left) A hypothetical trajectory that satisfies the assumptions in Lemma \ref{lemma:main} since $x_3 \preceq x_0$. (Right) The union of lower-set boxes (shaded region) is a RCIS.}
\label{fig:robust}
\end{figure}

A graphical depiction of the assumptions in Lemma \ref{lemma:main} is shown in Fig. \ref{fig:robust}. Lemma \ref{lemma:main}  motivates the following definition: 

\begin{define}
\label{define:s}
An \emph{s-sequence} is a finite length sequence of controls, denoted by:
\begin{equation}
u^s:=(u_0^*,u_1^*,u_2^*,\cdots,u^*_{T-1}),
\end{equation}
where there exist $x_0^* \in \mathcal{S}$ such that 
\begin{equation}
x_T^* \preceq x_0^*,
\end{equation}
where $T$ is the length of the sequence and $x^*_{k+1}=f(x_k^*,w^*,u_k^*), x_k^* \in \mathcal{S}, 0\le k \le T-1$. 
\end{define}
The conditions in the definition above can be formulated as the set of the following constraints:
\begin{equation}
\label{eq:constraints}
\left \{
\begin{array}{l}
 x_k^* \in \mathcal{S},  0 \le k \le T-1, \\
 x_{k+1}^*=f(x_k^*,w^*,u_k^*), \\
 x_T^* \preceq x_0^*.
\end{array}
\right.
\end{equation}
The theorem below immediately follows from Lemma \ref{lemma:main}. 
\begin{theorem}
If $(x_k^*,u_k^*), 0 \le k \le T-1$, is a feasible solution to the set of constraints \eqref{eq:constraints}, then $u^s=(u_0^*,u_1^*,u_2^*,\cdots,u^*_{T-1})$ is an s-sequence and the set
\begin{equation}
\label{eq:rc}
\Omega^*:=\bigcup_{k=0}^{T-1} \mathcal{R}(x_k^*)
\end{equation}
is a RCIS inside $\mathcal{S}$.
\end{theorem}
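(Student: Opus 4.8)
The plan is to recognize that this theorem is essentially a repackaging of Lemma~\ref{lemma:main} together with Definition~\ref{define:s}, so the proof should consist of two short verifications and one invocation of the lemma, with no genuine heavy lifting.

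First I would show that $u^s$ is an s-sequence in the sense of Definition~\ref{define:s}. A feasible solution of \eqref{eq:constraints} provides a point $x_0^* \in \mathcal{S}$, the recursion $x_{k+1}^* = f(x_k^*, w^*, u_k^*)$ with $x_k^* \in \mathcal{S}$ for $0 \le k \le T-1$, and the terminal inequality $x_T^* \preceq x_0^*$; these are exactly the three requirements defining an s-sequence of length $T$, so this step is pure unwinding of definitions. Next I would instantiate Lemma~\ref{lemma:main} with $N = T$, $x_0 = x_0^*$, and $u_k = u_k^*$, and check its two hypotheses. Hypothesis (1), that all iterates lie in $\mathcal{S}$, holds for $k = 0,\dots,T-1$ by feasibility, and for the terminal iterate as well since $x_T^* \preceq x_0^* \in \mathcal{S}$ and $\mathcal{S}$ is a lower-set. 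Hypothesis (2), the existence of $k^*$ with $x_N \in \mathcal{R}(x_{k^*})$, is witnessed by $k^* = 0$, because $x_T^* \preceq x_0^*$ means precisely $x_T^* \in \mathcal{R}(x_0^*)$. Lemma~\ref{lemma:main} then gives directly that $\Omega^* = \bigcup_{k=0}^{T-1} \mathcal{R}(x_k^*)$ is a RCIS. To conclude that it lies inside $\mathcal{S}$, I would note that each $x_k^* \in \mathcal{S}$ and $\mathcal{S}$, being a lower-set, contains $\mathcal{R}(x_k^*)$, so the union is contained in $\mathcal{S}$ (Proposition~\ref{prop:closure} further shows $\Omega^*$ is itself a lower-set, though that is not needed here).

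There is no real obstacle to this proof: all the substantive work — the monotonicity argument showing successors of any point in $\mathcal{R}(x_p)$ land in $\mathcal{R}(x_{p+1}) \subseteq \Omega^*$ — was already discharged in Lemma~\ref{lemma:main}. The only points deserving a moment's care are the index bookkeeping (setting $N = T$, and whether condition~(1) of the lemma is intended to cover the terminal index or only $0,\dots,T-1$), which the lower-set property resolves, and explicitly verifying the containment $\Omega^* \subseteq \mathcal{S}$ rather than stopping at invariance.
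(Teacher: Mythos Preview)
Your proposal is correct and matches the paper's approach exactly: the paper simply states that the theorem ``immediately follows from Lemma~\ref{lemma:main}'' without further detail, and your write-up supplies precisely the routine verifications (unwinding Definition~\ref{define:s}, instantiating Lemma~\ref{lemma:main} with $N=T$ and $k^*=0$, and using the lower-set property for $\Omega^*\subseteq\mathcal{S}$) that make this invocation rigorous. There is nothing to add or correct.
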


We now explain how to use the theorem above and find an $s-sequence$.
If $T$ is fixed, finding a solution for \eqref{eq:constraints} is a feasibility problem. One way to approach this problem is formulating \eqref{eq:constraints} as an  SMT (satisfiability modulo theories) problem. There exist powerful SMT solvers that are able to handle nonlinearities in the constraints \cite{gao2013dreal}. An alternative approach is formulating \eqref{eq:constraints} as the constraints of an optimization problem, where the cost function aims to maximize a notion of \emph{size} for the set $\Omega^*$. For instance, the following optimization problem:
\begin{equation}
\label{eq:opt}
\begin{array}{lll}
u_k^*,x_k^*= & argmax  & \left \| x_0^* \right \|_1,
\\
& s.t. & \text{Eq.} ~\eqref{eq:constraints},
\end{array}
\end{equation}
provides a feasible solution to \eqref{eq:constraints} where $L_1$ norm of $x_0^*$ is maximized. As opposed to the iterative procedure in \cite{kerrigan2001}, we are able to find a RCIS for system \eqref{eq:dynamics} by solving a single optimization problem.

The dynamics of a large class of systems can be written as mixed integer constraints. In particular, piecewise affine hybrid systems and safe sets that are unions of polyhedra (not necessarily convex) can be encoded using  mixed integer linear constraints (see, e.g., \cite{bemporad1999control}). Therefore, the optimization problem above can be written as a mixed integer linear programming (MILP) problem, which is solved using efficient state of the art solvers. 
If \eqref{eq:dynamics} is a linear system and $\mathcal{S}$ is a polyhedron, then \eqref{eq:opt} is solved in polynomial time. Otherwise, the time required for solving \eqref{eq:opt} grows polynomially with respect to the size of system \eqref{eq:dynamics} and exponentially with respect to $T$ and the number of integer constraints (e.g., the number of modes of the hybrid system).

If the set of constraints \eqref{eq:constraints} is infeasible, one has to change $T$ to search for feasibility. Algorithmically, we start from $T=1$ and implement $T \gets T+1$ until \eqref{eq:constraints} becomes feasible and a solution to Problem \ref{problem} is obtained. Large values of $T$ makes finding a feasible solution for \eqref{eq:constraints} impractical.  In Sec. \ref{sec:necessary}, we establish a relation for the necessity of the existence of s-sequences.

\begin{remark}
 As mentioned earlier, for any feasible solution, we may use \eqref{eq:rc} to find a RCIS. If multiple feasible solutions are available, we may find the union of all the RCISs provided by \eqref{eq:rc} to find a larger RCIS. Practically, RCIS are useful as terminal constraints of model predictive controllers (see \cite{kerrigan2001}). Therefore, larger RCISs might be desirable. We do not yet have a proof that by taking the union of all RCISs, in the limit $T\rightarrow \infty$, we are able to get arbitrarily close to the MRCIS.
 \end{remark}

\section{Controlled Limit Cycles and Attractive Sets}
\label{sec:cycle}
In the last section, we provided a solution to Problem \ref{problem}: $\Omega^*$ is the set of initial conditions and the control strategy is based on s-sequences. In this section, we characterize the infinite time system response to the repetitions of an s-sequence and show its relation to controlled limit cycles and attractive sets.  
%Starting from $x_0^*$ and applying the s-sequence $u^s$, we arrive in $x_T^* \preceq x_0^*$. 
%We shall continue to repeatedly apply $u^s$.
\begin{lemma}
\label{lemma:infinite}
Let $u^s=(u_0^*,\cdots,u_k^*)$ be the s-sequence that corresponds to $x_0^* \in \mathcal{S}$. Then the trajectory of the following system:
\begin{equation}
\label{eq:star}
x^*_{cT+k+1}=f(x^*_{cT+k},w^*,u_k^*), c=0,1,\cdots, 0\le k \le T-1,
\end{equation}
converges to a limit cycle, i.e.  $\lim_{c\rightarrow \infty} x^*_{cT+k}$ exists. 
\end{lemma}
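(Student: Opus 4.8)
The plan is to reduce the claim to the monotone convergence theorem for real sequences, by studying the $T$-step ``period map'' obtained by freezing the controls of the s-sequence and the adversarial input at $w^*$.

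First I would introduce, for $j=1,\dots,T$, the single-step maps $f_j(x):=f(x,w^*,u_{j-1}^*)$ and their compositions $h_k:=f_k\circ\cdots\circ f_1$ (with $h_0:=\mathrm{id}$), so that along the trajectory of \eqref{eq:star} we have $x^*_{cT+k}=h_k(x^*_{cT})$ for all $c\ge 0$ and $0\le k\le T$; in particular $g:=h_T$ satisfies $x^*_{(c+1)T}=g(x^*_{cT})$ and hence $x^*_{cT}=g^c(x_0^*)$. Because system \eqref{eq:dynamics} is cooperative, fixing $w=w^*$ and any control makes each $f_j$ order preserving on $\mathbb{R}_+^n$, so every composition $h_k$, and in particular $g$, is order preserving.

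Next I would invoke the defining property of the s-sequence, $x_T^*\preceq x_0^*$, i.e.\ $g(x_0^*)\preceq x_0^*$. Applying the monotone map $g$ and arguing by induction gives $g^{c+1}(x_0^*)\preceq g^c(x_0^*)$ for every $c$, so $\{x^*_{cT}\}_{c\ge 0}$ is a coordinatewise nonincreasing sequence. Since it lies in $\mathbb{R}_+^n$ it is bounded below by $0$, so each coordinate converges and $\bar x:=\lim_{c\to\infty}x^*_{cT}$ exists. For $1\le k\le T-1$, monotonicity of $h_k$ together with $x^*_{(c+1)T}\preceq x^*_{cT}$ yields $x^*_{(c+1)T+k}=h_k(x^*_{(c+1)T})\preceq h_k(x^*_{cT})=x^*_{cT+k}$, so each subsequence $\{x^*_{cT+k}\}_c$ is likewise nonincreasing and bounded below, hence convergent. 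This establishes that $\lim_{c\to\infty}x^*_{cT+k}$ exists for every $k$, which is the assertion of the lemma; as a byproduct $x^*_{cT+k}\preceq h_k(x_0^*)=x_k^*\in\mathcal{S}$, and since $\mathcal{S}$ is a lower-set the whole trajectory stays in $\mathcal{S}$.

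Finally, to justify the phrase ``limit cycle'', I would note that if $f$ is continuous in $x$, passing to the limit in $x^*_{cT+k+1}=f(x^*_{cT+k},w^*,u_k^*)$ shows the limit points $\bar x_k:=\lim_{c}x^*_{cT+k}$ satisfy $\bar x_{k+1}=f(\bar x_k,w^*,u_k^*)$ and $\bar x_T=\bar x_0$, so $(\bar x_0,\dots,\bar x_{T-1})$ is a genuine periodic orbit (a fixed point when all $\bar x_k$ coincide). I do not expect a real obstacle: the only delicate points are that the monotone-convergence step genuinely relies on the state space being the positive orthant (the lower bound $0$), and that, absent a continuity hypothesis on $f$, what is rigorously proven is existence of the limit rather than the limit set being exactly a periodic orbit.
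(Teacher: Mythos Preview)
Your proof is correct and follows essentially the same route as the paper: use the s-sequence inequality $x_T^*\preceq x_0^*$ together with monotonicity of the frozen-disturbance maps to show each subsequence $\{x^*_{cT+k}\}_c$ is componentwise nonincreasing in $\mathbb{R}_+^n$, then invoke the monotone convergence theorem. Your formulation via the period map $g=h_T$ and the compositions $h_k$ is a slightly cleaner packaging of the same argument, and you are more explicit than the paper about the continuity hypothesis needed to pass to the limit in the recursion and conclude that the limit set is genuinely a periodic orbit.
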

\begin{proof}
It follows from the definition of s-sequences that $x_T^* \preceq x_0^*$. Monotonicity implies:
\begin{equation}
\begin{array}{c}
x_{T+1}^*=f(x_T^*,w^*,u_0^*) \preceq f(x_0^*,w^*,u_0^*) =x_1^*,
\\
\vdots \\
x_{2T}^*=f(x_{2T-1}^*,w^*,u_{T-1}^*) \preceq f(x_{T-1}^*,w^*,u_{T-1}^*)=x_T^*.
\end{array}
\end{equation}
By continuing the argument above we draw the conclusion that:
\begin{equation}
x_{(c+1)T+k}^* \preceq x_{cT+k}^*, c=0,1,\cdots.
\end{equation}
Therefore, each vector component of the following sequence is non-increasing:
\begin{equation}
x_k^*,x_{T+k}^*,x_{2T+k}^*,\cdots,x_{cT+k}^*,
\end{equation}
and it is already known that is lower bounded (by the origin). As a result, it follows from the \emph{cooperative convergence theorem} \cite{yeh2006real} that the limit $c \rightarrow \infty$ exists. We denote:
\begin{equation}
x_k^{\infty}:=\lim_{c\rightarrow \infty} x^*_{cT+k}.
\end{equation}
As a result, $f(x_{T-1}^\infty,w^*,u_{T-1}^*)=x_0^\infty$ and the trajectory of \eqref{eq:star} converges to $\overline{x_0^\infty, x_1^\infty, \cdots,x_{T-1}^\infty}$.
\end{proof}

We introduce the following repetitive sequence:
\begin{equation}
\label{eq:series}
\overline{u}^s:=\overline{(u_0^*,u_1^*,\cdots,u_{T-1}^*)}.
\end{equation}
The sequence above is basically the control strategy. Its applicability solely requires the initial condition to be in $\mathcal{R}(x_0^*)$ 
(it is straightforward to see from the proof of Lemma \ref{lemma:main} that $\mathcal{R}(x_0^*)$ is reachable from any point in $\Omega^*$). In other words, our solution to the control strategy in Problem \ref{problem} is unexpectedly a simple policy that does not require state feedback.

\begin{theorem}
If $x_k^*,u_k^*, 0 \le k \le T-1$, is a feasible solution to \eqref{eq:constraints}, then the set 
\begin{equation}
\label{eq:attractive}
\Gamma=\bigcup_{k=0}^{T-1} \mathcal{R}(x_k^\infty),
\end{equation}
is an \emph{attractive set} for all the trajectories of system \eqref{eq:dynamics}  starting from $\mathcal{R}(x_0^*)$ under the control strategy \eqref{eq:series}.   
\end{theorem}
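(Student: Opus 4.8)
The plan is to prove attractiveness by a monotone comparison between an arbitrary closed-loop trajectory and the nominal trajectory analyzed in Lemma \ref{lemma:infinite}. Fix any initial state $x(0) \in \mathcal{R}(x_0^*)$ and any admissible adversarial sequence $w(0), w(1), \dots$ with $w(t) \in \mathcal{W}$, and let $x(t)$ be the trajectory generated by the repetitive strategy \eqref{eq:series}, i.e. $x(t+1) = f(x(t), w(t), u_{t \bmod T}^*)$. Let $x_t^*$ denote the nominal trajectory of \eqref{eq:star} (so $x_{t+1}^* = f(x_t^*, w^*, u_{t \bmod T}^*)$ starting from the same $x_0^*$). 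The first step is to show by induction on $t$ that $x(t) \preceq x_t^*$ for all $t \ge 0$: the base case $x(0) \preceq x_0^*$ is the hypothesis, and the inductive step follows from cooperativity together with $w(t) \preceq w^*$ (which holds since $\mathcal{W} = \mathcal{R}(w^*)$), exactly as in the proof of Lemma \ref{lemma:main}. Writing $t = cT + k$ with $0 \le k \le T-1$ yields $x(cT+k) \preceq x_{cT+k}^*$.

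The second step feeds in the convergence established in Lemma \ref{lemma:infinite}: for each phase $k$, the sequence $\{x_{cT+k}^*\}_{c \ge 0}$ is componentwise non-increasing and converges to $x_k^\infty$. Since there are only finitely many phases $k \in \{0,\dots,T-1\}$, for every $\varepsilon > 0$ there is a single $C_\varepsilon$ with $x_{cT+k}^* \preceq x_k^\infty + \varepsilon 1_n$ for all $c \ge C_\varepsilon$ and all $k$. Combining this with the first step gives $x(cT+k) \preceq x_k^\infty + \varepsilon 1_n$ for all $c \ge C_\varepsilon$.

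The final step converts this componentwise upper bound into a statement about distance to $\Gamma$. Since $f$ maps into $\mathbb{R}_+^n$ we have $0 \preceq x(cT+k)$, so the componentwise minimum $\min\big(x(cT+k), x_k^\infty\big)$ lies in $\mathcal{R}(x_k^\infty) \subseteq \Gamma$ and differs from $x(cT+k)$ by at most $\varepsilon$ in each coordinate. Hence $\mathrm{dist}_\infty\big(x(t), \Gamma\big) \le \varepsilon$ for all $t \ge C_\varepsilon T$, and since $\varepsilon$ was arbitrary, $\mathrm{dist}\big(x(t), \Gamma\big) \to 0$. As this holds for every initial point in $\mathcal{R}(x_0^*)$ and every admissible adversary, $\Gamma$ is attractive for the trajectories in question; one may additionally note $\Gamma \subseteq \Omega^* \subseteq \mathcal{S}$ because $x_k^\infty \preceq x_k^*$ and $\mathcal{S}$ is a lower-set.

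I do not expect a deep obstacle, since the argument merely combines the comparison principle already used for Lemma \ref{lemma:main} with the monotone convergence already proved in Lemma \ref{lemma:infinite}. The points needing care are: (i) pinning down the precise meaning of \emph{attractive set} and the basin (here $\mathcal{R}(x_0^*)$, with the periodic controller phase-aligned so that $u_0^*$ is applied at $t=0$); (ii) making the per-phase convergence of Lemma \ref{lemma:infinite} uniform over the $T$ phases before extracting one $C_\varepsilon$; and (iii) passing from the vector inequality $x(t) \preceq x_k^\infty + \varepsilon 1_n$ to a genuine metric bound, which is where nonnegativity of the state and the lower-set structure of $\Gamma$ enter. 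If in addition $f$ is continuous, the same comparison shows $\Gamma$ is itself robustly invariant under \eqref{eq:series}, but this is not required for the statement as posed.
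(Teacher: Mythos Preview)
Your proposal is correct and follows essentially the same approach as the paper's own (sketch) proof: both compare the closed-loop trajectory to the nominal trajectory $x_t^*$ via monotonicity to obtain $x(cT+k)\preceq x_{cT+k}^*$, then invoke the convergence of Lemma~\ref{lemma:infinite} so that the right-hand side approaches $\Gamma$. The additional care you take---making the per-phase convergence uniform over $k$ and explicitly turning the vector inequality into a metric bound using nonnegativity and the lower-set structure of $\Gamma$---simply fills in details the paper's sketch leaves implicit.
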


\begin{proof}(sketch)
Let $x^*_0,x^*_1,x^*_2,\cdots$ and $x_0,x_1,x_2,\cdots$,
represent the trajectories of $x^*_{k+1}=f(x_k^*,w^*,u_k^*)$ and $x_{k+1}=f(x_k,w,u_k^*)$, respectively. 
Monotonicity indicates that:
 \begin{equation*}
x_{cT+k} \preceq x^*_{cT+k}, c=0,1,\cdots, 0\le k \le T-1.
\end{equation*}
As $c \rightarrow \infty$, the right hand side approaches $\Gamma$. 
Therefore, all the left hand side values also finally reach $\Gamma$ and remain there forever.  
\end{proof}

\section{Necessity of existence of s-sequences}
\label{sec:necessary}

In the last sections, we showed that the existence of s-sequences is sufficient for providing a solution to Problem \ref{problem}. In this section we provide a fundamental result on the necessity conditions for the existence of s-sequences. We show that, under some assumptions, the existence of s-sequences is \emph{almost} necessary.  
\begin{assumption}
\label{assume:safe}
The safe set $\mathcal{S}$ is bounded. 
\end{assumption}
%The other required assumption is a notion of \emph{strict monotonicity} with respect to the adversarial inputs. 

%Strict monotonicity means that if $w_1 \prec w_2$, then $ f(x,u,w_1) \prec f(x,u,w_2)$. However, a weaker assumption is sufficient:

\begin{assumption}(Strict monotonicity with respect to the adversarial inputs)
There exist $\alpha>0$ such that for all $x \in \mathbb{R}_+^n, u \in \mathcal{U}$ and $w_1,w_2$ such that:
\begin{equation}
w_1 + \varepsilon 1_n  \preceq w_2,
\end{equation}
where $1_n$ is a n-dimensional vector of all ones and $\varepsilon>0$, the following relation holds:
\begin{equation}
f(x,u,w_1) + \alpha \varepsilon 1_n  \preceq f(x,u,w_2).
\end{equation}
\label{assume:strict}
\end{assumption}
%We note that the results of this section are still preliminary. It is not clear what classes of cooperative systems satisfy the assumptions we make here.  We leave addressing these issues to our future work. 
%The assumption above is much weaker than the assumption of strict monotonicity. 
%We only require that \emph{ multi-step strict monotonicity} with respect to the adversarial inputs holds for long enough safe trajectories. In the case of an urban traffic network, an increase in the adversarial inputs may not increase the number of vehicles in the whole network in one step (thus strict monotonicity does not hold). 
%For instance, in the network shown in the Fig. \ref{fig:network}, an increase in the number of vehicles that enter link 1 does not increase the number of vehicles on link 10 in one time step. However, after long enough time, depending on the topology of the network, provided that the flow of the additional vehicles is not obstructed (which follows from the safety assumption), the presence of the extra vehicles becomes visible in the whole network. 
We now use the assumptions above to provide the key idea of this section.

\begin{lemma}
\label{lemma:pigeon}
If there exist a robust safety control strategy $u=h(x), h:\Omega \rightarrow \mathcal{U}$, such that the trajectory of system \eqref{eq:dynamics} with $\mathcal{W}=\mathcal{R}(w^*)$ is restricted to $\mathcal{S}$, then there exist at least one s-sequence with length $T$ for system \eqref{eq:dynamics} with $\mathcal{W}=\mathcal{R}(w^*-1_n \varepsilon)$ such that 
\begin{equation}
T \le \frac{c}{(\alpha \varepsilon)^n},
\label{eq:lemmabound}
\end{equation}
where $c$ is a constant solely depending on $\mathcal{S}$, $0<\varepsilon < w^*$, and $\alpha$ is defined in Assumption \ref{assume:strict}. 
\end{lemma}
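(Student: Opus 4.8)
The plan is to manufacture an s-sequence for the perturbed system directly out of a sufficiently long trajectory of the assumed closed-loop strategy $h$, by combining a pigeonhole argument on a grid of mesh $\alpha\varepsilon$ with the strict monotonicity of Assumption \ref{assume:strict}. The crux is to run the closed loop against the \emph{larger} adversary $w^*$ but then replay the same sequence of control actions against the \emph{smaller} worst-case adversary $\hat w := w^*-\varepsilon 1_n$ of the perturbed system: each step then gains an order slack of $\alpha\varepsilon$, which is exactly what is needed to upgrade an approximate recurrence of the trajectory into an exact dominance.

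Concretely, I would first pick any $\xi_0\in\Omega$ and iterate $\xi_{k+1}=f(\xi_k,w^*,h(\xi_k))$. Since $w^*\in\mathcal{R}(w^*)$ and $h$ is a robust safety strategy on $\Omega\subseteq\mathcal{S}$, this iteration is well defined with $\xi_k\in\Omega\subseteq\mathcal{S}$ for every $k$, and by Assumption \ref{assume:safe} the set $\mathcal{S}$ is contained in a box $\prod_{\ell=1}^n[0,b_\ell]$. Partition that box into axis-aligned cells of side $\alpha\varepsilon$; the number of cells is $M:=\prod_{\ell=1}^n\lceil b_\ell/(\alpha\varepsilon)\rceil$, which is at most $c/(\alpha\varepsilon)^n$ for a constant $c$ fixed by the $b_\ell$, i.e.\ by $\mathcal{S}$. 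By the pigeonhole principle, among the $M+1$ points $\xi_0,\dots,\xi_M$ there are indices $i<j\le M$ with $\xi_i$ and $\xi_j$ in a common cell, hence $\xi_j\preceq\xi_i+\alpha\varepsilon 1_n$.

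Next, set $u_k^*:=h(\xi_k)$ for $k=i,\dots,j-1$ (note $\hat w\in\mathbb{R}_+^n$ since $\varepsilon<w^*$) and define the perturbed-system trajectory $\eta_i:=\xi_i$, $\eta_{k+1}:=f(\eta_k,\hat w,u_k^*)$. I claim $\eta_k\preceq\xi_k-\alpha\varepsilon 1_n$ for all $i<k\le j$, proved by induction on $k$: assuming $\eta_k\preceq\xi_k$, cooperativity gives $f(\eta_k,\hat w,u_k^*)\preceq f(\xi_k,\hat w,u_k^*)$, and since $\hat w+\varepsilon 1_n=w^*$, Assumption \ref{assume:strict} gives $f(\xi_k,\hat w,u_k^*)+\alpha\varepsilon 1_n\preceq f(\xi_k,w^*,u_k^*)=\xi_{k+1}$, so $\eta_{k+1}\preceq\xi_{k+1}-\alpha\varepsilon 1_n$. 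In particular $\eta_k\preceq\xi_k\in\mathcal{S}$, so $\eta_k\in\mathcal{S}$ because $\mathcal{S}$ is a lower-set, and $\eta_j\preceq\xi_j-\alpha\varepsilon 1_n\preceq\xi_i=\eta_i$. Hence the controls $(u_i^*,\dots,u_{j-1}^*)$, the point $x_0^*=\xi_i$, and the states $\eta_i,\dots,\eta_j$ satisfy the defining conditions of an s-sequence (Definition \ref{define:s}) for the system with $\mathcal{W}=\mathcal{R}(w^*-1_n\varepsilon)$, of length $T=j-i\le M\le c/(\alpha\varepsilon)^n$.

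I expect the conceptual heart — and the one place to be careful — to be the adversary comparison in the inductive step: the direction must be chosen so that the $\alpha\varepsilon$ slack from strict monotonicity cancels the $\alpha\varepsilon$ error tolerated by a single grid cell, and one must notice that the lower-set property of $\mathcal{S}$ is precisely what certifies feasibility ($\eta_k\in\mathcal{S}$) of the replayed trajectory. A minor honesty point: with a uniform mesh one only gets $\lceil b_\ell/(\alpha\varepsilon)\rceil\le(b_\ell+\alpha\varepsilon)/(\alpha\varepsilon)$, so the clean form $c/(\alpha\varepsilon)^n$ with $c$ depending solely on $\mathcal{S}$ requires restricting to $\alpha\varepsilon\le 1$ (the regime of interest), or else absorbing a harmless dependence on $\alpha$ and $w^*$ into $c$.
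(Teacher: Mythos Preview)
Your proof is correct and follows essentially the same route as the paper: run the closed loop against the maximal adversary $w^*$, invoke the pigeonhole principle on a uniform grid over the bounded safe set $\mathcal{S}$ to find two trajectory points in the same cell, then replay the intervening control inputs against the smaller adversary $w^*-\varepsilon 1_n$ and use Assumption~\ref{assume:strict} to convert the grid-cell slack into the dominance $x_T'\preceq x_0$ required by Definition~\ref{define:s}. Your version is in fact tidier: the paper writes the cell side as $\varepsilon$ yet concludes with $T\le c/(\alpha\varepsilon)^n$, whereas your choice of mesh $\alpha\varepsilon$ makes both the comparison $\eta_j\preceq\xi_j-\alpha\varepsilon 1_n\preceq\xi_i$ and the cell-count bound line up cleanly, and you explicitly use the lower-set property of $\mathcal{S}$ to certify $\eta_k\in\mathcal{S}$, a step the paper leaves implicit.
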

\begin{proof} (sketch)
Consider a uniform grid over the set $\mathcal{S}$ with cube cells of length $\varepsilon$. The number of cells $\mathcal{N}$ is  proportional to $\frac{1}{\varepsilon^n}$, so we let $\mathcal{N}=\frac{c}{\varepsilon^n}$, where $c$ depends on the shape of $\mathcal{S}$.  
%We also assume that $\varepsilon^*$ is chosen so small such that $\mathcal{N}\ge N^*$. 
Now consider a safe trajectory for system $ x_{k+1}=f(x_k,w_k^*,u_k)$ such that the trajectory does not meet the conditions in Lemma \ref{lemma:main}. By the virtue of the \emph{pigeonhole principle}, after $\mathcal{N}+1$ points obtained from the trajectory, there exist a cell that contains at least two points. In other words, without loss of generality, by redefining $x_0$ as the earlier point in the cell, there exist $T\le \mathcal{N}$ such that
\begin{equation}
x_T-x_0 \preceq \epsilon 1_n.
\label{equ:compare1}
\end{equation} 
If the same control sequence, $u_0,u_1,\cdots,u_{T-1}$, is applied to the system $x_{k+1}^\prime=f(x_k^\prime,w_k^*-1_n \varepsilon,u_k)$, $x_0^\prime=x_0$, it follows from Assumption \ref{assume:strict} that
\begin{equation}
x_T^\prime + \alpha \varepsilon 1_n \preceq x_T.
\label{equ:compare2}
\end{equation}
By comparing \eqref{equ:compare1} and \eqref{equ:compare2}, we obtain that $x_T^\prime \preceq x_0$, which indicates  that $(u_0,u_1,\cdots,u_{T-1})$ is an s-sequence for system \eqref{eq:dynamics} where $\mathcal{W}=\mathcal{R}(w^*-1_n \varepsilon)$ and the following bound is obtained: 
$
T \le \frac{c}{(\alpha \varepsilon)^n}.
$
\end{proof}

\begin{theorem}
\label{theorem:necessary}
Provided that Assumption \ref{assume:safe} and Assumption \ref{assume:strict} are true, the existence of an s-sequence is \emph{almost} necessary for the existence of a solution to Problem \ref{problem} in the sense that:
\begin{enumerate}
\item if a robust safe control strategy for system \eqref{eq:dynamics} with $\mathcal{W}=\mathcal{R}(w^*)$ \emph{exists}, then there exist at least one s-sequence of length less than $T$ for the system \eqref{eq:dynamics} with $\mathcal{W}=\mathcal{R}(w^*-1_n \varepsilon)$ such that $T \le \frac{c_1}{\varepsilon^n}$, 
\item if an s-sequence of length less than $T$ is not found for the system \eqref{eq:dynamics} with $\mathcal{R}(w^*)$, then there does not exist a robust safe control strategy for the system \eqref{eq:dynamics} with $\mathcal{W}=\mathcal{R}(w^*+1_n \varepsilon)$ such that $\varepsilon \ge \frac{c_2}{T^\frac{1}{n}}$,
\end{enumerate}
where $c_1$ and $c_2$ are $\varepsilon$ independent constants.  
\end{theorem}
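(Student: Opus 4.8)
The plan is to derive both parts of the statement from Lemma~\ref{lemma:pigeon} by unwinding constants, taking a contrapositive, and adding one elementary monotonicity-in-$\mathcal{W}$ remark. Part~(1) is essentially Lemma~\ref{lemma:pigeon} restated: that lemma already produces, from the existence of a robust safe control strategy for $\mathcal{W}=\mathcal{R}(w^*)$, an s-sequence of length at most $c/(\alpha\varepsilon)^n$ for system~\eqref{eq:dynamics} with $\mathcal{W}=\mathcal{R}(w^*-1_n\varepsilon)$, where $c$ depends only on the shape of $\mathcal{S}$. So I would set $c_1:=c/\alpha^n$ and $T:=c_1/\varepsilon^n$, observe that $c_1$ is $\varepsilon$-independent precisely because $c$ depends only on the fixed set $\mathcal{S}$ — which is finite by Assumption~\ref{assume:safe} — and note that ``less than $T$'' versus ``at most $T$'' in the bound on the length is immaterial up to enlarging $c_1$ slightly.

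For part~(2), the key observation is that Lemma~\ref{lemma:pigeon} may be applied with the nominal adversarial bound $w^*$ replaced by the shifted bound $\tilde w:=w^*+1_n\varepsilon$: since $\mathcal{S}$ is unchanged, the constant $c$, hence $c_1$, is the same. This yields the implication ``a robust safe control strategy exists for $\mathcal{W}=\mathcal{R}(w^*+1_n\varepsilon)$'' $\Rightarrow$ ``an s-sequence of length at most $c_1/\varepsilon^n$ exists for $\mathcal{W}=\mathcal{R}(\tilde w-1_n\varepsilon)=\mathcal{R}(w^*)$''. Taking the contrapositive and setting $c_2:=c_1^{1/n}$, so that $\varepsilon\ge c_2/T^{1/n}$ is equivalent to $c_1/\varepsilon^n\le T$, gives: if no s-sequence of length less than $T$ exists for $\mathcal{W}=\mathcal{R}(w^*)$, then no robust safe control strategy exists for $\mathcal{W}=\mathcal{R}(w^*+1_n\varepsilon)$ at the threshold $\varepsilon=c_2/T^{1/n}$.

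It then remains to promote this from the threshold value to all $\varepsilon\ge c_2/T^{1/n}$. For such $\varepsilon$ one has $\mathcal{R}\bigl(w^*+1_n\,(c_2/T^{1/n})\bigr)\subseteq\mathcal{R}(w^*+1_n\varepsilon)$, and any policy that keeps every trajectory in $\mathcal{S}$ against all $w\in\mathcal{R}(w^*+1_n\varepsilon)$ does so a fortiori against all $w$ in the subset $\mathcal{R}\bigl(w^*+1_n\,(c_2/T^{1/n})\bigr)$; hence nonexistence of a strategy for the smaller adversarial set forces nonexistence for the larger one, which is the claim.

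I do not expect a genuine obstacle; the work is bookkeeping, and the points to be careful about are (i) that the shift $w^*\mapsto w^*+1_n\varepsilon$ leaves the geometric constant $c$ untouched — exactly what lets the same $c_1,c_2$ serve in both directions — and (ii) correctly folding the strict-monotonicity constant $\alpha$ from Assumption~\ref{assume:strict} into $c_1$ and then into $c_2=c_1^{1/n}$. No idea beyond Lemma~\ref{lemma:pigeon} should be needed.
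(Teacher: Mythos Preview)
Your proposal is correct and matches the paper's intent: the paper states Theorem~\ref{theorem:necessary} without proof, treating it as an immediate corollary of Lemma~\ref{lemma:pigeon}, and your derivation---absorb $\alpha$ into $c_1=c/\alpha^n$ for part~(1), shift $w^*\mapsto w^*+1_n\varepsilon$ and take the contrapositive with $c_2=c_1^{1/n}$ for part~(2)---is exactly that unwinding. One minor simplification: your final monotonicity step is not needed, since for each $\varepsilon\ge c_2/T^{1/n}$ you can apply Lemma~\ref{lemma:pigeon} directly at that $\varepsilon$ (the hypothesis ``no s-sequence of length less than $T$'' already implies ``no s-sequence of length at most $c_1/\varepsilon^n$'' whenever $c_1/\varepsilon^n\le T$), but your route via the threshold value is equally valid.
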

%\begin{proof}(sketch)
%The first part directly follows from Lemma \ref{lemma:pigeon}. The second statement is its converse that is proved by contradiction. Suppose that there exists a robust  safe control strategy for the system \eqref{eq:dynamics} with $\mathcal{W}=\mathcal{R}(w^*+1_n \varepsilon)$. Then Lemma \ref{lemma:pigeon} indicates an s-sequence exists for the case $\mathcal{W}=\mathcal{R}(w^*)$, which contradicts the assumption.
%\end{proof}
%As mentioned, applying the repetition strategy \eqref{eq:series} does not require state feedback and only requires the knowledge that $x_0 \in \mathcal{R}(x_0^*)$. This is true if we assume the system starts from the origin (e.g. in a traffic network, this corresponds to some time that all links are empty). 
The theorem above addresses the concern of searching for very long s-sequences. Starting from $T=1$ and ending at some $T$ that is beyond our computational resources, without having an s-sequence found, we know that the existence of a solution to Problem \ref{problem} is highly unlikely. Informally, such a policy, if exists, is \emph{fragile}, in the sense that, a slight increase in the adversarial inputs makes the policy invalid. 

We conclude this section by mentioning that the results of this section are still theoretical and preliminary. We did not explain how to determine  $\alpha$ for a cooperative system. Furthermore, the approach based on the number of cells in a uniform grid may lead to very wide bounds  in Theorem \ref{theorem:necessary} that seem conservative for practical use.

\section{Case studies}
\label{sec:case}
In this section, we provide two case studies. The first case study is an academic example in two dimensions hence it is convenient to graphically illustrate the results. The second case study is of practical interest, where we apply our methods to the urban traffic network shown in Fig. \ref{fig:network}. 

\subsection{Case Study 1: Two-mode planar hybrid system}
Consider \eqref{eq:dynamics} to be the following system in $\mathbb{R}_+^2$: 
\begin{equation*}
f(x,w,u)= \left \{
\begin{array}{ll}
A_1x+w, & u=1, \\
A_2x+w, & u=2, 
\end{array}
\right.
\end{equation*}
where $x=(x_1,x_2)^T$, $w \in \mathcal{R}(w^*)$, $w^*=(0.2,0.1)^T$, and 
%\begin{equation*}
%x^+=A_\sigma x + w,
%\end{equation*}
\begin{equation*}
A_1=\left (
\begin{array}{cc}
1.5 & 0.1 \\
0.2 & 0.5 
\end{array}
\right),
A_2=\left (
\begin{array}{cc}
0.7 & 0.1 \\
0.1 & 1.1 
\end{array}
\right).
\end{equation*}
The system above represents a two-mode hybrid (switched) system with additive disturbances where the control input set is $\mathcal{U}=\left\{ 1,2 \right\}$. Note that if $u$ is fixed, trajectories grow unbounded.  
We wish to find a control policy that restricts the evolution of the system to the safe set 
$$\mathcal{S}=\left \{ x \big| x_1+x_2 \le 50 \right \},$$ 
which is a triangular lower-set.
We encode the system above as the set of the following mixed-integer constraints:
\begin{equation*}
\left \{
\begin{array}{l}
A_1 x_k^*+w^* - M (u^*_k-1) (1 ~ 1)^T \preceq x^*_{k+1}, \\
x^*_{k+1} \preceq A_1 x_k^*+w^* + M (u^*_k-1) (1 ~ 1)^T, \\
A_2 x_k^*+w^* - M (2-u^*_k) (1 ~ 1)^T \preceq x^*_{k+1}, \\
x^*_{k+1} \preceq A_2 x_k^*+w^* + M (2-u^*_k) (1 ~ 1)^T,
\end{array}
\right.
\end{equation*}
where $M$ is a sufficiently large number ($1000$ in our implementation). We setup the optimization problem \eqref{eq:opt} as a MILP.
\subsection*{Results}
Using the Gurobi MILP solver \cite{optimization2014inc}, we find that the smallest $T$ that renders the MILP feasible is $T=7$. The solution is found almost instantly on a personal computer. The following s-sequence is obtained:
\begin{equation*}
u^s=(1,2,2,1,2,2,2),
\end{equation*}
which corresponds to $x_0^*=(16.15,33.85)^T$, $x_7^*=(16.15,33.21)^T$. We find the RCIS $\Omega$ using \eqref{eq:rc}. As explained in Sec. \ref{sec:cycle}, by applying the control sequence $\overline{(1,2,2,1,2,2,2)}$ to $x_{k+1}^*=f(x_k^*,w^*,u^*_k)$, we arrive at the limit cycle $\overline{x_0^\infty, \cdots,x_6^\infty,x_0^\infty}$, where $x_0^\infty=(13.62,27.78)^T$. The attractive set $\Gamma$ is found using \eqref{eq:attractive}. We also simulate a trajectory of system $x_{k+1}=f(x_k,w,u_k^*)$. The values of $w$ are drawn from a uniform distribution over $\mathcal{R}(w^*)$.
The results are illustrated in Fig. \ref{fig:case}. 

%We also make note that finding the MRCIS of this system is computationally difficult. The robust predecessor is required to be computed for each control $u\in \{1,2\}$. Therefore, the number of polyhedral sets at each iteration of \eqref{eq:mrcis} is at least doubled.  

\begin{center}
\begin{figure*}[t]
\begin{tabular}{ccc}
\includegraphics[width=0.3\textwidth]{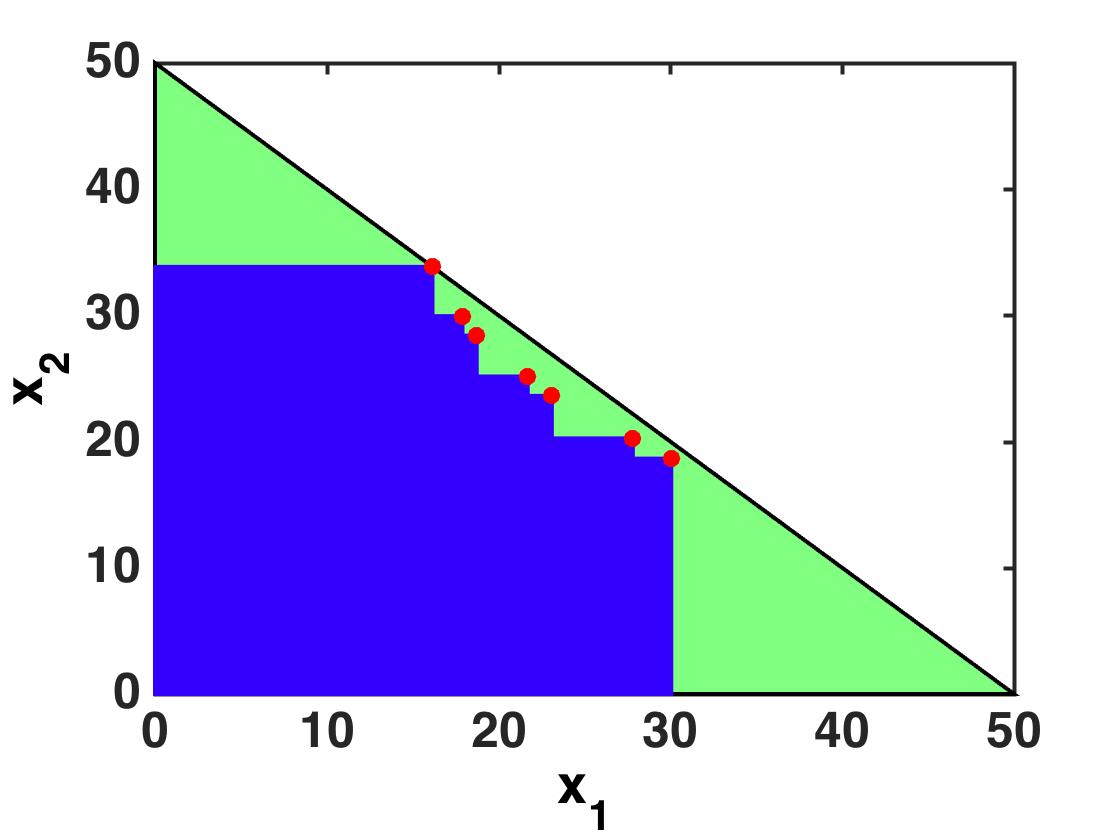} &  \includegraphics[width=0.3\textwidth]{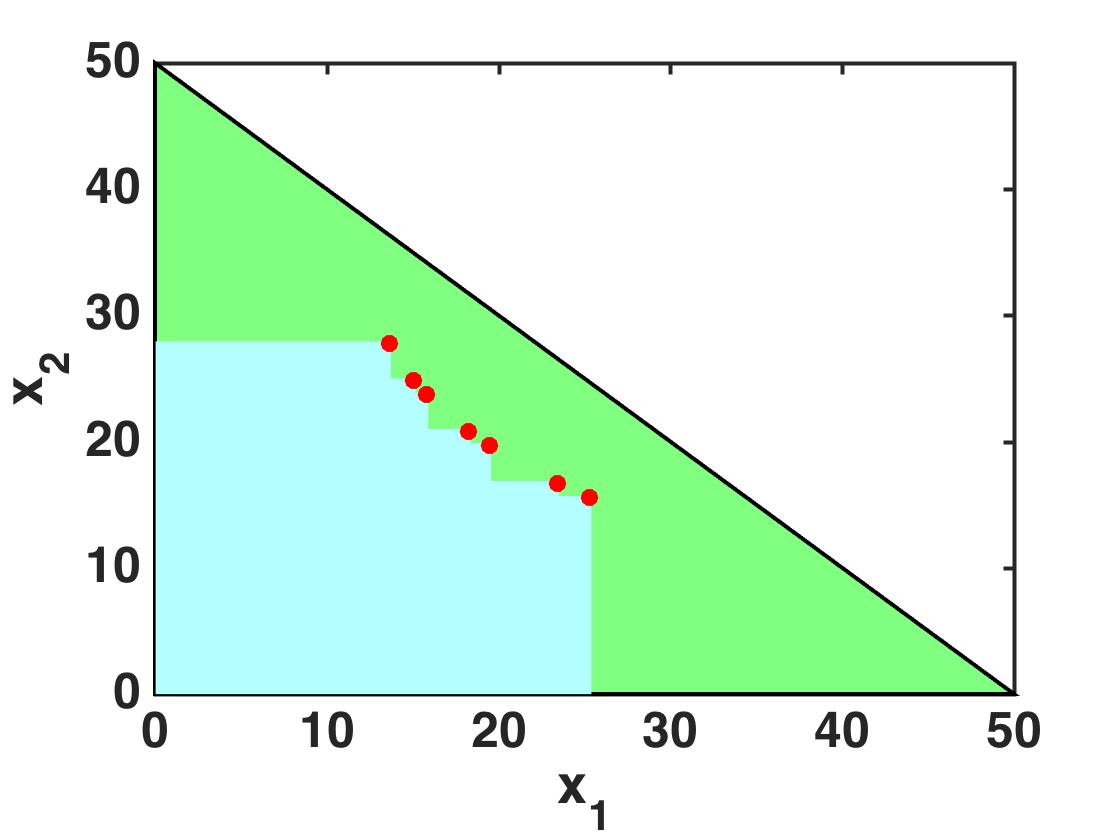} &\includegraphics[width=0.3\textwidth]{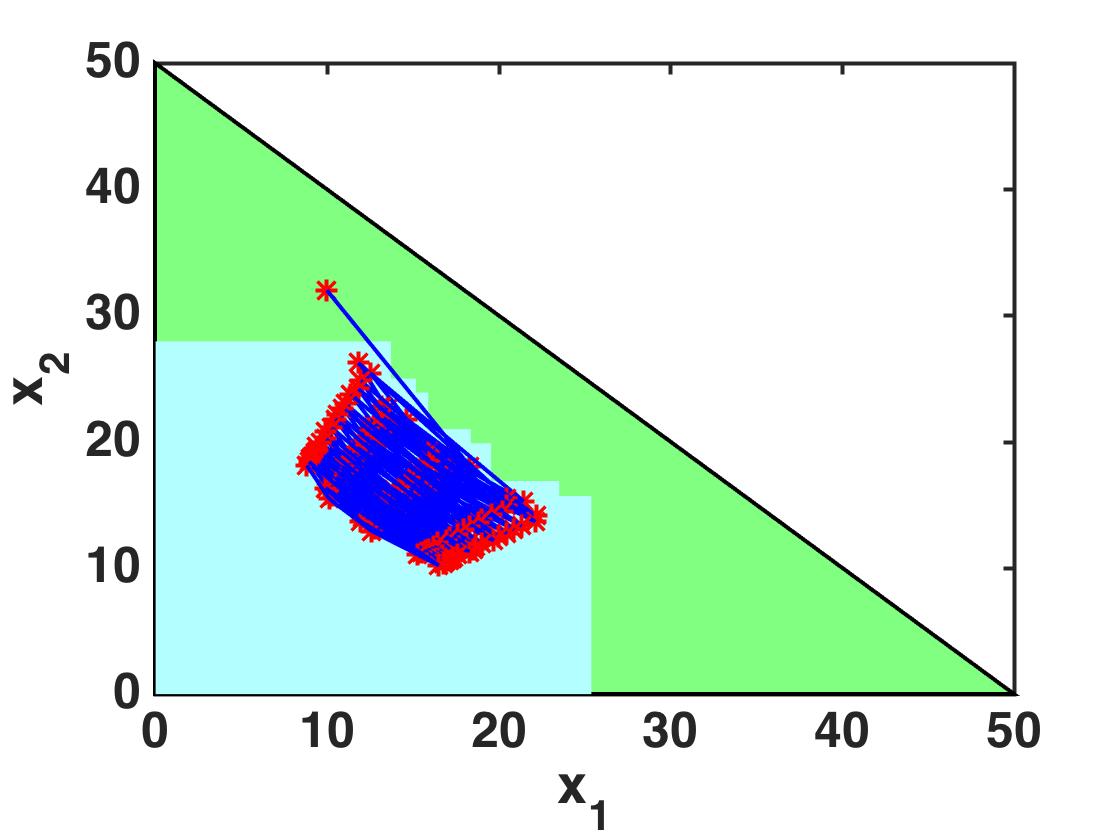} 
\end{tabular}
\caption{ Case Study 1: (Left) The blue region is RCIS $\Omega^*$ inside the green region $S$. The red points at the corners of the boxes are $x_k^*,0\le k\le 6$. (Middle) The cyan region is the attractive set $\Gamma$. The corner red points are $x_k^\infty,0\le k\le 6$. (Right) The trajectory of system \eqref{eq:dynamics} starting from $x_0=(10,32)$ under the control strategy \eqref{eq:series}. It can be seen that the trajectory reaches $\Gamma$ and stays there forever.}
\label{fig:case}
\end{figure*}
\end{center}

\subsection{Case study 2: Urban traffic network}
\label{sec:case2}

First, we explain the details of the model in \cite{coogan2015controlling}. Let $\mathcal{L}$ and $\mathcal{J}$ represent the set of links and junctions, respectively. Link $l$ is characterized by its \emph{tail junction} $\tau(l) \in \left\{\mathcal{J} \cup \emptyset\right\}$ and \emph{head junction} $\eta(l) \in \mathcal{I}$, where $\tau(l)=\emptyset$ indicates that link $l$ is an entry link to the network. We say that link $k$ is a \emph{downstream }link for $l$ if $\eta(l)=\tau(k)$. Similarly, link $l$ is an \emph{upstream} link for $k$. 
For simplicity, we consider networks in which all links are either in north-south ($NS$) or east-west ($EW$) directions. We denote the direction of link $l$ by $dir(l) \in \left\{NS,EW\right\}$. The traffic light at junction $j \in \mathcal{J}$ is denoted by $u^{(j)} \in \{NS,EW\}$. The control input is a $\left |\mathcal{J}\right |$ dimensional tuple representing all the traffic lights in the network. 
%Now we describe the system and explain how it is written in the form of \eqref{eq:dynamics}.
The state is $x \in \mathbb{R}_+^n$, where $n=\left |\mathcal{L} \right|$ and $x^{(l)}$ is the number of vehicles on link $l$. The number of vehicles that flow out of link $l$ in one time step, denoted by $z^{(l)}$, is:
\begin{equation}
\label{eq:flow}
z^{(l)}= \left \{
\begin{array}{cl}
\min \left(x^{(l)},c^{(l)},\underset{k, \eta(l)=\tau(k)} \min s_{lk} \right), & u^{(\eta(l))}=dir(l), \\
0, & \text{otherwise},
\end{array}
\right.
\end{equation}
where $c^{(l)}$ is the maximum outflow of vehicles from $l$ in one time step and $s_{lk}$ is the supply available from downstream link $k$ to $l$. The FIFO-based model for supply is $s_{lk}=\frac{\alpha_{lk}}{\beta_{lk}} (x^{(k),cap}-x^{(k)})$, where $\alpha_{lk} \in [0,1]$ is the capacity ratio of $k$ dedicated to $l$, $\beta_{lk} \in [0,1]$ is the ratio of flow turning from $l$ to $k$ and $x^{(k),cap} \in \mathbb{R}_+^n$ is the vehicular capacity of link $k$. As mentioned in Sec. \ref{sec:problem}, monotonicity does not hold when supply limits the flow at diverging junctions. Therefore, by restricting the state to the following rectangular safe set:
\begin{equation}
\mathcal{S}=\left\{ x \big |  x^{(l)} \le x^{(l),s} \right\}, 
\end{equation}
where $x^{(l),s} \le x^{(l),cap}-\underset{k, \eta(k)=\tau(l)} \max  \frac{\alpha_{lk}}{\beta_{lk}} c^{(k)}$, we ensure that $s_{lk}$ is never the minimizer in \eqref{eq:flow}.
% Therefore, restricted to $\mathcal{S}$, we always have $\forall l,k, \eta(l)=\tau(k), z_l \le s^{lk}$. 
 As a result, \eqref{eq:flow} becomes:
\begin{equation}
\label{eq:cooperative}
z^{(l)}= \left \{
\begin{array}{cl}
\min \left(x^{(l)},c^{(l)}\right), & u^{(\eta(l))}=dir(l), \\
0, & \text{otherwise}. 
\end{array}
\right.
\end{equation} 
The discrete time evolution of $x^{(l)}$ is given by:
\begin{equation}
x^{(l),+}=x^{(l)}-z^{(l)}+ w^{(l)}+ \underset{k, \eta(k)=\tau(l)} \sum \beta_{kl} z^{(k)},
\end{equation}
where $w^{(l)} \in [0,w^{(l),*}]$ is the adversarial input corresponding to link $l$. It is straightforward to check that $\frac{\partial x^{(l),+}}{\partial x^{(l)}} \in \{0,1\}$, $\frac{\partial x^{(l),+}}{\partial x^{(k)}} \in \{0,\beta_{kl}\}$, $\frac{\partial x^{(l),+}}{\partial w^{(l)}} =1$ and $\frac{\partial x^{(l),+}}{\partial w^{(k)}} =0$. Therefore, the evolution of each state component is cooperative with respect to the state and adversarial inputs. Finally, in a compact form, the evolution  can be written in the form \eqref{eq:dynamics}. 
We wish to find a control policy for the urban traffic network shown in Fig. \ref{fig:network} such that the state  is always in $\mathcal{S}$. The network parameters are given in Table I. 

\begin{table}[t]
\centering
\label{table:values}
\caption{Parameters of the network in Fig. \ref{fig:network}}
\begin{tabular}[c]{ |c | }
\hline
$x^{(l),s}=60$, $l=1,2,3,4,5,6,9,10$, \\
$x^{(l),s}=60$, $l=7,8,11,12$
%$x^{(1),s}=x^{(2),s}=x^{(3),s}=x^{(4),s}=x^{(5),s}=x^{(6),s}=x^{(9),s}=x^{(10),s}=60$, \\
%$x^{(7),s}=x^{(8),s}=x^{(11),s}=x^{(12),s}=30$ 
\\ 
\hline
%$c^{(1)}=c^{(2)}=c^{(3)}=c^{(4)}=c^{(5)}=c^{(6)}=c^{(9)}=c^{(10)}=20$, 
%\\
%$c^{(7)}=c^{(8)}=c^{(11)}=c^{(12)}=10$
$c^{(l),s}=20$, $l=1,2,3,4,5,6,9,10$, \\
$c^{(l),s}=10$, $l=7,8,11,12$
\\
\hline
$\beta_{12}=0.7,\beta_{45}=\beta_{78}=\beta_{9~10}=0.7$, $\beta_{23}=\beta_{56}=0.6$,
\\
$\beta_{11~5}=\beta_{11~12}=0.5$,$\beta_{82}=\beta_{2~10}=0.4$
\\ 
$\beta_{93}=\beta_{10~6}=\beta_{11~5}=\beta_{68}=\beta_{4~12}=0.3$
\\
\hline
%$w^{(1),*}=w^{(4),*}=8$, $w^{(7),*}=4$, $w^{(9),*}=7$, $w^{(11),*}=6$,
%\\
%$w^{(2),*}=w^{(3),*}=w^{(5),*}=w^{(6),*}=w^{(8),*}=w^{(10),*}=w^{(12),*}=0$
$w^{(1),*}=w^{(4),*}=8$, $w^{(7),*}=4$, $w^{(9),*}=7$, $w^{(11),*}=6$ \\
$w^{(l),*}=0$, $l=2,3,5,6,8,10,12$
\\
\hline
\end{tabular}
\end{table}

\subsection*{Results}

%We encode the dynamical constraints as mixed integer linear constraints. Due to space constraints, the details of this encoding are not presented. The interested reader is referred to \cite{bemporad1999control} to see an overview of encoding dynamical equations involving logics as mixed integer constraints.

We formulate \eqref{eq:opt} as a MILP. The smallest $T$ for which an s-sequence is found is $T=5$. The time required to solve the MILP using Gurobi is 79 seconds on a 3GHz Core i7 MacBook Pro. In comparison to finite state-based safety game implemented in\cite{sadraddini2016provably}, a problem of this size (12 links, 6 junctions) is intractable, unless a very coarse partitioning of the state space is considered. 

Table II shows the traffic light at each junction for each control input in $(u_0^*,u_1^*,u_2^*,u_3^*,u_4^*)$. We also find that:
\begin{equation*}
x_0^*=(48,14,54,48,17.66,54,4,12.47,28,60,28,29)^T.
\end{equation*}
We obtain a RCIS and an attractive set that lie in $\mathbb{R}_+^{12}$. As explained in Sec. \ref{sec:necessary}, we can simulate the system $x_{k+1}^*=f(x^*_k,w^*,u_k^*)$ to obtain the limit cycle, which is illustrated in Fig. \ref{fig:traf1}. A trajectory of the system starting from $x_0^*$ with $w$ chosen from a uniform distribution over $\mathcal{R}(w^*)$ is also shown in Fig. \ref{fig:traf2}. Note that all the components of the trajectory in Fig. \ref{fig:traf2} are upper bounded by their corresponding values in the trajectory in Fig. \ref{fig:traf1}.

\centering
\begin{table}[t]
\centering
\caption{Traffic lights at junctions corresponding to the s-sequence}
\begin{tabular}{| c | c | c | c | c | c |}
\hline
junction & $u_0^*$ & $u_1^*$ & $u_2^*$ & $u_3^*$ & $u_4*$  \\ 
\hline
 $a$ & $NS$ & $EW$ & $NS$ & $NS$ & $EW$  \\ 
\hline
 $b$ & $NS$ & $NS$ & $EW$ & $EW$ & $EW$  \\ 
\hline 
 $c$ & $NS$ & $EW$ & $NS$ & $EW$ & $NS$  \\
\hline
 $d$ & $NS$ & $EW$ & $NS$ & $NS$ & $EW$  \\
\hline
 $e$ & $NS$ & $NS$ & $EW$ & $EW$ & $EW$  \\
\hline
 $f$ & $NS$ & $EW$ & $NS$ & $EW$ & $NS$  \\
\hline
\end{tabular}
\end{table}

\begin{figure*}[t]
\centering
\includegraphics[width=0.97\textwidth]{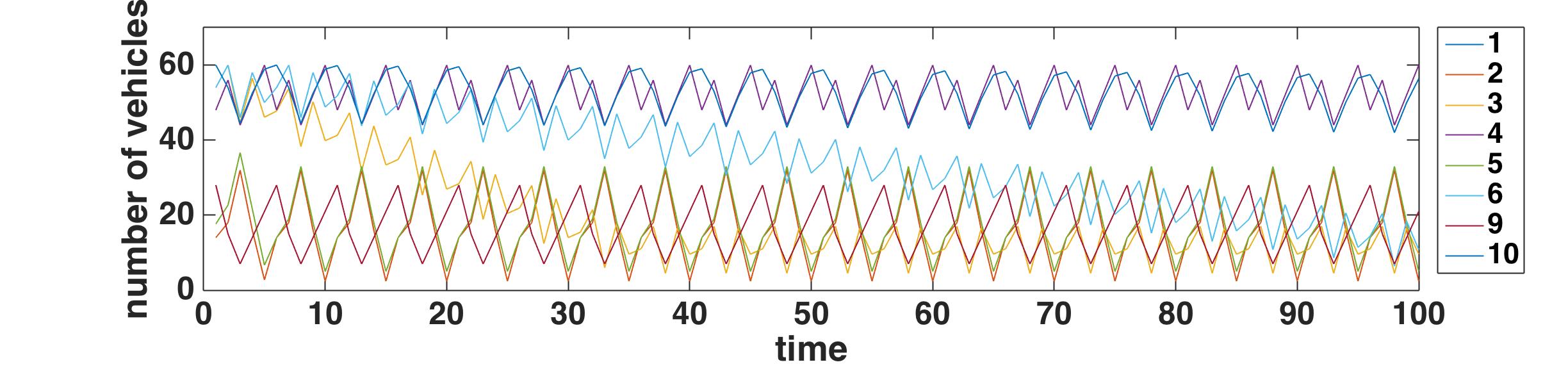} \\  \includegraphics[width=0.97\textwidth]{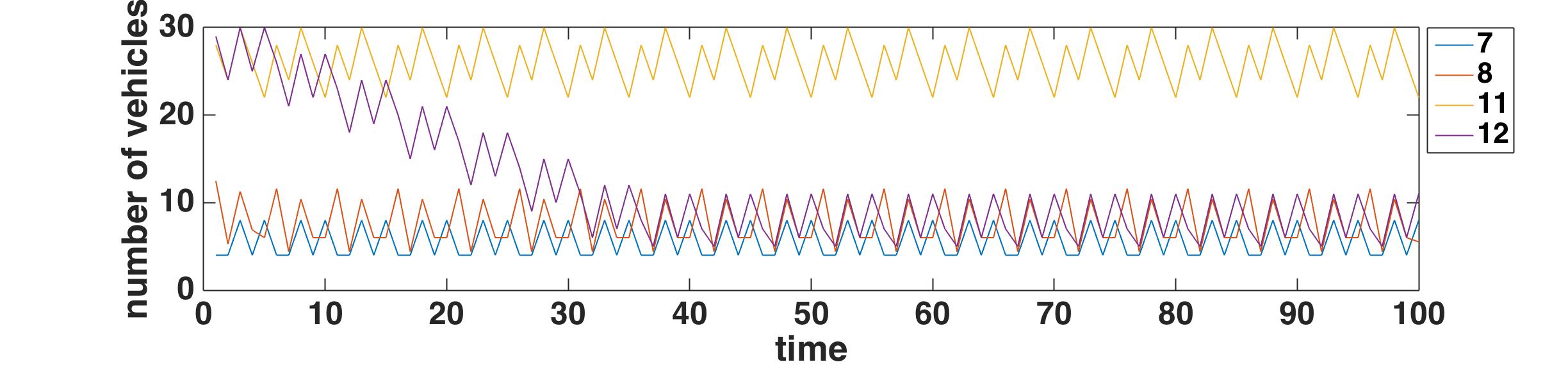}
\caption{Case Study 2: The trajectory of $x_{k+1}^*=f(x^*_k,w^*,u_k^*)$ converges to a limit cycle.}
\label{fig:traf1}
\end{figure*}

\begin{figure*}[t]
\centering
\includegraphics[width=0.97\textwidth]{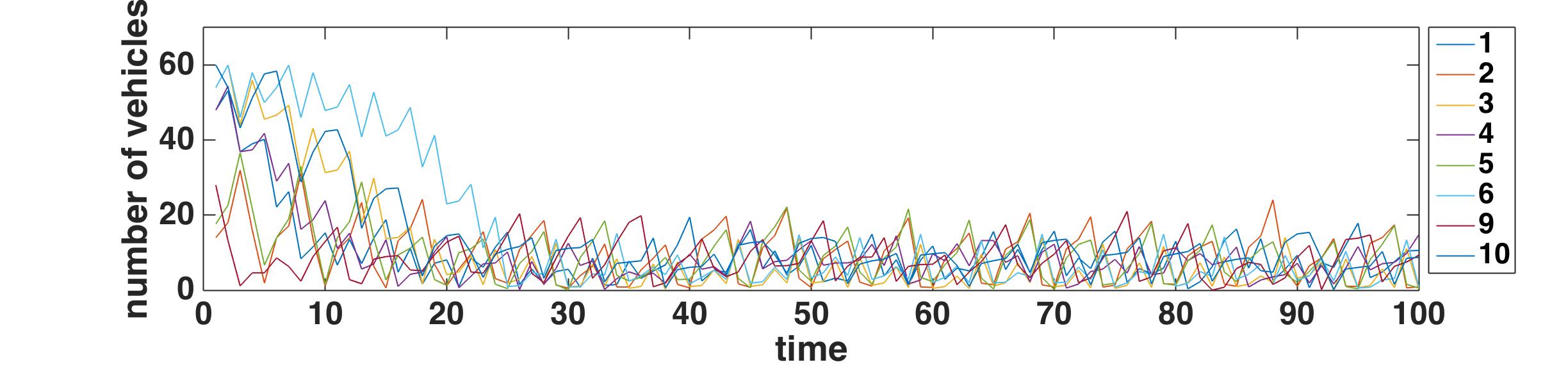} \\  \includegraphics[width=0.97\textwidth]{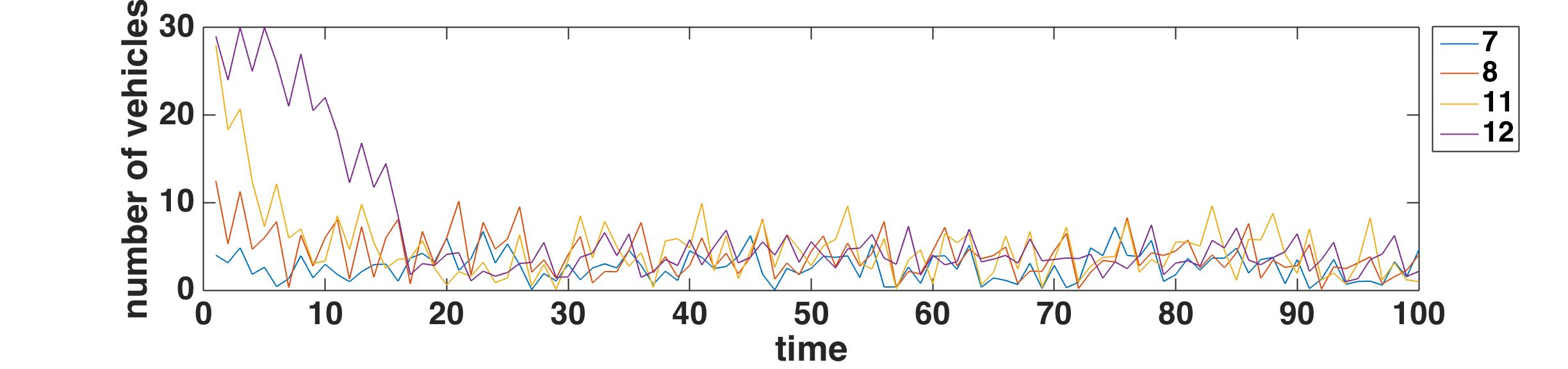}
\caption{Case Study 2: A trajectory of the system $x_{k+1}=f(x_k,w,u_k^*)$ always remains in the safe set.}
\label{fig:traf2}
\end{figure*}

\bibliography{bib_monotone}

\begin{thebibliography}{10}
\providecommand{\url}[1]{#1}
\csname url@rmstyle\endcsname
\providecommand{\newblock}{\relax}
\providecommand{\bibinfo}[2]{#2}
\providecommand\BIBentrySTDinterwordspacing{\spaceskip=0pt\relax}
\providecommand\BIBentryALTinterwordstretchfactor{4}
\providecommand\BIBentryALTinterwordspacing{\spaceskip=\fontdimen2\font plus
\BIBentryALTinterwordstretchfactor\fontdimen3\font minus
  \fontdimen4\font\relax}
\providecommand\BIBforeignlanguage[2]{{%
\expandafter\ifx\csname l@#1\endcsname\relax
\typeout{** WARNING: IEEEtran.bst: No hyphenation pattern has been}%
\typeout{** loaded for the language `#1'. Using the pattern for}%
\typeout{** the default language instead.}%
\else
\language=\csname l@#1\endcsname
\fi
#2}}

\bibitem{blanchini1999survey}
F.~Blanchini, ``{Set invariance in control--a survey},'' \emph{Automatica},
  vol.~35, no.~11, pp. 1747--1767, 1999.

\bibitem{kerrigan2001}
E.~C. Kerrigan, ``{Robust Constraint Satisfaction: Invariant Sets and
  Predictive Control},'' Ph.D. dissertation, University of Cambridge, 2000.

\bibitem{rakovic2004computation}
S.~V. Rakovi{\'{c}}, P.~Grieder, M.~Kvasnica, D.~Q. Mayne, and M.~Morari,
  ``{Computation of invariant sets for piecewise affine discrete time systems
  subject to bounded disturbances},'' in \emph{Decision and Control, 2004. CDC.
  43rd IEEE Conference on}, vol.~2.\hskip 1em plus 0.5em minus 0.4em\relax
  IEEE, 2004, pp. 1418--1423.

\bibitem{smith2008monotone}
H.~Smith, \emph{{Monotone dynamical systems: an introduction to the theory of
  competitive and cooperative systems}}.\hskip 1em plus 0.5em minus 0.4em\relax
  American Mathematical Soc., 2008, no.~41.

\bibitem{angeli2003monotone}
D.~Angeli and E.~D. Sontag, ``{Monotone control systems},'' \emph{IEEE
  Transactions on Automatic Control}, vol.~48, no.~10, pp. 1684--1698, 2003.

\bibitem{coogan2015controlling}
S.~Coogan, E.~A. Gol, M.~Arcak, and C.~Belta, ``{Controlling a network of
  signalized intersections from temporal logical specifications},'' in
  \emph{American Control Conference (ACC), 2015}.\hskip 1em plus 0.5em minus
  0.4em\relax IEEE, 2015, pp. 3919--3924.

\bibitem{ghaemi2011safety}
R.~Ghaemi and D.~{Del Vecchio}, ``{Safety control of piece-wise continuous
  order preserving systems},'' in \emph{Proceedings of the IEEE Conference on
  Decision and Control}.\hskip 1em plus 0.5em minus 0.4em\relax IEEE, 2011, pp.
  545--551.

\bibitem{meyer2015safety}
P.-J. Meyer, A.~Girard, and E.~Witrant, ``{Safety control with performance
  guarantees of cooperative systems using compositional abstractions},'' in
  \emph{5th IFAC Conference on Analysis and Design of Hybrid Systems (ADHS),},
  Atlanta, GA, 2015.

\bibitem{forghani2015safety}
M.~Forghani, J.~M. McNew, D.~Hoehener, and D.~Del~Vecchio, ``Safety control of
  a class of stochastic order preserving systems with application to collision
  avoidance near stop signs,'' in \emph{American Control Conference (ACC),
  2015}.\hskip 1em plus 0.5em minus 0.4em\relax IEEE, 2015, pp. 507--514.

\bibitem{forghani2016design}
------, ``Design of driver-assist systems under probabilistic safety
  specifications near stop signs,'' \emph{IEEE Transactions on Automation
  Science and Engineering}, vol.~13, no.~1, pp. 43--53, 2016.

\bibitem{Coogan2014}
S.~Coogan and M.~Arcak, ``{Dynamical properties of a compartmental model for
  traffic networks},'' in \emph{2014 American Control Conference}, 2014, pp.
  2511--2516.

\bibitem{lovisari2014stability}
E.~Lovisari, G.~Como, and K.~Savla, ``{Stability of monotone dynamical flow
  networks},'' in \emph{Decision and Control (CDC), 2014 IEEE 53rd Annual
  Conference on}.\hskip 1em plus 0.5em minus 0.4em\relax IEEE, 2014, pp.
  2384--2389.

\bibitem{coogan2015efficient}
\BIBentryALTinterwordspacing
S.~Coogan and M.~Arcak, ``{Efficient finite abstraction of mixed monotone
  systems},'' in \emph{Proceedings of the 18th International Conference
  {\ldots}}.\hskip 1em plus 0.5em minus 0.4em\relax ACM, 2015, pp. 58--67.
  [Online]. Available: \url{http://dl.acm.org/citation.cfm?id=2728607}
\BIBentrySTDinterwordspacing

\bibitem{baier2008principles}
C.~Baier, J.-P. Katoen, and Others, \emph{{Principles of model
  checking}}.\hskip 1em plus 0.5em minus 0.4em\relax MIT press Cambridge, 2008,
  vol. 26202649.

\bibitem{tabuada2009verification}
P.~Tabuada, \emph{{Verification and Control of Hybrid Systems }}.\hskip 1em
  plus 0.5em minus 0.4em\relax Springer Science {\&} Business Media, 2008.

\bibitem{EricS.KimMuratArcak2016}
E.~S. Kim, M.~Arcak, and S.~A. Seshia, ``{Directed Specifications and
  Assumption Mining for Monotone Dynamical Systems},'' in \emph{19th ACM
  International Conference on Hybrid Systems: Computation and Control (HSCC)},
  Vienna, Austria, 2016.

\bibitem{coogan2015mixed}
\BIBentryALTinterwordspacing
S.~Coogan, M.~Arcak, and A.~a. Kurzhanskiy, ``{On the Mixed Monotonicity of
  FIFO Traffic Flow Models},'' \emph{arXiv preprint arXiv:1511.05081}, 2015.
  [Online]. Available: \url{http://arxiv.org/abs/1511.05081}
\BIBentrySTDinterwordspacing

\bibitem{gao2013dreal}
S.~Gao, S.~Kong, and E.~M. Clarke, ``{dReal : An SMT Solver for Nonlinear
  Theories over the Reals},'' in \emph{Automated Deduction--CADE-24}.\hskip 1em
  plus 0.5em minus 0.4em\relax Springer, 2013, no. 1041377, pp. 208--214.

\bibitem{bemporad1999control}
A.~Bemporad and M.~Morari, ``{Control of systems integrating logic, dynamics,
  and constraints},'' \emph{Automatica}, vol.~35, no.~3, pp. 407--427, 1999.

\bibitem{yeh2006real}
J.~Yeh, \emph{{Real analysis: theory of measure and integration}}.\hskip 1em
  plus 0.5em minus 0.4em\relax World Scientific, 2006.

\bibitem{optimization2014inc}
G.~O. Inc., ``{Gurobi Optimizer reference manual},'' p. 572, 2014.

\bibitem{sadraddini2016provably}
S.~Sadraddini and C.~Belta, ``{A Provably Correct MPC Approach to Safety
  Control of Urban Traffic Networks},'' \emph{arXiv preprint arXiv:1602.01028},
  2016.

\end{thebibliography}

\end{document}